\documentclass[12pt]{article}
\usepackage{amsmath, amssymb, amsthm, amscd}
\usepackage{geometry}
\usepackage{hyperref}
\usepackage{natbib}
\usepackage{tikz}
\usetikzlibrary{positioning, arrows.meta, shapes.geometric, calc}
\usepackage{graphicx}
\usepackage{booktabs}
\usepackage{longtable}
\usepackage{multirow}
\usepackage{array}
\usepackage{setspace}
\usepackage{orcidlink}
\usepackage{float}
\usepackage{subcaption}
\usepackage{appendix}
\usepackage{xcolor}

\theoremstyle{definition}
\newtheorem{definition}{Definition}[section]
\newtheorem{proposition}{Proposition}[section]
\newtheorem{theorem}{Theorem}[section]

\newtheorem{assumption}{Assumption}[section]

\title{Regularized Regression by Composition: Identifiability, Structured Penalization, and Statistical Guarantees for Multi-Flow Distributional Models}
\author{%
	\orcidlink{0000-0002-5287-5381}\\[0.1cm]
	Safaa K. Kadhem\\
	Department of Mathematics and Computer Applications, Faculty of Science,\\
	Al-Muthanna University, Samawah, Iraq\\
	\texttt{safaa.kadhem@mu.edu.iq}
}
\date{March 2026}

\begin{document}
	
	\maketitle
	
	\begin{abstract}
	\begin{abstract}
		Regression by composition provides a flexible framework for constructing conditional distributions through sequential group actions. However, when multiple flows act on the same distribution, the model becomes non‑identifiable, leading to flat likelihood regions and unstable estimates.		We introduce a structured regularization framework that resolves this issue by assigning flow‑specific penalties. The resulting estimator is defined as a penalized maximum likelihood problem with heterogeneous regularization across flows. We establish theoretical properties, including identifiability under penalization, uniqueness of the minimizer via strict convexification, and asymptotic consistency. For the adaptive Lasso, we further prove the oracle property. An efficient proximal gradient algorithm handles non‑smooth penalties. Extensive simulation studies evaluate performance under varying sample sizes, correlation structures, and signal‑to‑noise ratios, demonstrating that regularised methods (Lasso and Elastic Net) successfully break non‑identifiability and achieve low estimation error with controlled false positive rates. An application to NHANES data on asthma and lead exposure illustrates the practical utility: the unregularized estimator yields implausible coefficients, whereas regularised estimators produce stable, interpretable models and automatically select the relevant risk transformation. The L’Abbé plots derived from regularised estimators indicate a protective effect of reducing lead exposure. 		
		The proposed framework bridges identifiability theory with penalized estimation and opens the door to high‑dimensional and longitudinal extensions.
	\end{abstract}
	\end{abstract}
	
	\section{Introduction}
	\label{sec:intro}
	
\subsection{Motivation and background}

The analysis of conditional distributions has become a central theme in modern statistics, extending classical mean-based regression toward a full characterization of distributional heterogeneity \citep{rigby2005}. Within this paradigm, regression by composition \citep{farewell2026} provides a unifying framework in which conditional distributions are constructed through sequential group actions acting on a reference distribution.

Formally, the conditional distribution $P_{\mathbf{X}}$ is expressed as
\[
P_{\mathbf{X}} = p_0 \cdot \eta_1(\mathbf{X}) \cdot \eta_2(\mathbf{X}) \cdots \eta_K(\mathbf{X}),
\]
where each $\eta_k(\mathbf{X})$ represents a covariate-dependent transformation belonging to a vector space $\mathbb{V}_k$, and the operator $\cdot$ denotes a group action on the space of distributions.

This formulation subsumes a wide range of classical models, including Gaussian regression, logistic regression, and survival models, while also enabling the construction of highly flexible hybrid models through the composition of multiple flows. The resulting framework offers a powerful language for modeling complex distributional effects beyond standard parametric assumptions.
	
\subsection{The identifiability challenge}

Despite its flexibility, regression by composition introduces a fundamental statistical challenge: non-identifiability arising from redundant or interacting flows. Specifically, when multiple transformations act on the same distribution, distinct parameter configurations may induce the same probability measure:
\[
p_0 \cdot \eta_1 \cdot \eta_2 = p_0 \cdot \tilde{\eta}_1 \cdot \tilde{\eta}_2,
\]
with $(\eta_1,\eta_2) \neq (\tilde{\eta}_1,\tilde{\eta}_2)$.

This lack of injectivity implies that the likelihood function may exhibit flat regions, ridges, or multiple global optima, rendering classical maximum likelihood estimation ill-posed. Such phenomena are closely related to identifiability issues studied in mixture models, latent variable models, and overparameterized regression frameworks \citep{hennig2000identifiability}.

In the context of regression by composition, these issues are exacerbated by the algebraic structure of flows, particularly when group actions commute or partially overlap in their effect on the distribution. As a consequence, additional structure is required to ensure well-defined inference.	
\subsection{Contributions}

This paper develops a principled regularization framework for regression by composition, with a focus on resolving non-identifiability and stabilizing estimation. The main contributions are as follows:

\begin{enumerate}
	\item We introduce a structured penalization framework that assigns flow-specific penalties, thereby breaking invariance structures and restoring identifiability in multi-flow models.
	
	\item We provide a formal analysis of non-identifiability in regression by composition, including an explicit characterization of likelihood ridges in multi-flow binary models.
	
	\item We establish theoretical guarantees for the proposed estimator, including uniqueness of the minimizer induced by strict convexification and asymptotic consistency under standard regularity conditions.
	
	\item We develop an efficient optimization algorithm based on proximal gradient methods, tailored to the composite structure of the model and capable of handling non-smooth penalties.
	
	\item We conduct a comprehensive simulation study that evaluates the finite-sample behavior of the estimator under varying model complexity, correlation structures, and sample sizes.
	
	\item We demonstrate the practical utility of the method through an application to environmental health data, showing how structured regularization yields stable and interpretable models.
	
	\item Finally, we outline extensions to high-dimensional and longitudinal settings, highlighting connections to structured sparsity and mixed-effects modeling.
\end{enumerate}

The paper is organized as follows. 	Section \ref{sec:literature} provides a comprehensive review of related work, including distributional regression, transformation models, and normalizing flows. Section \ref{sec:methodology} develops the theoretical foundations of regularized regression by composition, including identifiability analysis and asymptotic properties. Section \ref{sec:algorithm} describes the computational implementation. Section \ref{sec:simulation} presents simulation studies. Section \ref{sec:application} applies the method to real data. Section \ref{sec:discussion} concludes with a discussion of limitations and future directions.
	
	\section{Related Work}
	\label{sec:literature}
	
	\subsection{Distributional regression and GAMLSS}
	Generalized Additive Models for Location, Scale and Shape (GAMLSS) \citep{rigby2005} represent a major advance in distributional regression. In GAMLSS, all parameters of a response distribution (e.g., location $\mu$, scale $\sigma$, skewness $\nu$, kurtosis $\tau$) are modeled as functions of covariates:
	\[
	g_k(\theta_k) = \eta_k(\mathbf{X}) = \beta_{k0} + \sum_{j=1}^{J_k} f_{kj}(\mathbf{X}),
	\]
	where $g_k$ are link functions and $f_{kj}$ are smooth functions. GAMLSS has been widely adopted for estimating treatment effects on entire distributions \citep{hohberg2020treatment} and has been applied in fields ranging from epidemiology to econometrics.
	
	While GAMLSS offers great flexibility, it requires the analyst to specify a parametric family for the response distribution. Regression by composition, by contrast, builds distributions through sequential transformations, potentially offering greater flexibility but also introducing the identifiability challenges addressed in this paper.
	
	\subsection{Transformation models and normalizing flows}
	Transformation models \citep{hothorn2014} represent an alternative approach where the response is transformed to achieve a simple distribution. The conditional distribution is expressed as
	\[
	P(Y \le y \mid \mathbf{X}) = F_0(h(y \mid \mathbf{X})),
	\]
	where $F_0$ is a reference distribution and $h$ is a monotone transformation function.
	
	Normalizing flows \citep{papamakarios2021normalizing} extend this idea by composing a sequence of invertible differentiable transformations:
	\[
	\mathbf{z} = f_K \circ f_{K-1} \circ \cdots \circ f_1(\mathbf{y}),
	\]
	where $\mathbf{z}$ follows a simple base distribution. While normalizing flows are extremely flexible, they typically require the transformations to be invertible and differentiable, which may not hold for all group actions considered in regression by composition.
	
	\subsection{Penalized likelihood and regularization}
	Regularization techniques have a long history in statistics. The Lasso \citep{tibshirani1996} introduced $\ell_1$ penalty for variable selection, while ridge regression \citep{hoerl1970} uses $\ell_2$ penalty for shrinkage. The elastic net \citep{zou2005} combines both. These methods have been extended to generalized linear models, survival analysis, and many other settings \citep{hastie2015}.
	
	In the context of distributional regression, penalized approaches have been developed for GAMLSS \citep{stasinopoulos2017} and for transformation models \citep{hothorn2018}. However, to our knowledge, no prior work has addressed regularization in the specific context of regression by composition, where flows act sequentially on the same distribution.
	
\section{Methodology}
\label{sec:methodology}

\subsection{Preliminaries: affine spaces and flows}
Following \citet{farewell2026}, we define the space $\mathcal{P}$ of signed measures on $\Upsilon$ with total mass $1$. This space has an affine structure: for any $p \in \mathcal{P}$ and any signed measure $q$ with $q(\Upsilon)=0$, we can define $p \oplus q$ as the measure with $(p \oplus q)(A) = p(A) + q(A)$.

A flow is a group action of a vector space $\mathbb{V}$ on $\mathcal{P}$:
\[
\cdot : \mathcal{P} \times \mathbb{V} \to \mathcal{P},
\]
satisfying:
\begin{enumerate}
	\item $p \cdot 0 = p$ for all $p \in \mathcal{P}$ (identity);
	\item $(p \cdot v) \cdot v' = p \cdot (v + v')$ for all $p \in \mathcal{P}$, $v,v' \in \mathbb{V}$ (additivity).
\end{enumerate}
We also assume the action is faithful: if $p \cdot v = p \cdot v'$ for all $p$, then $v=v'$.

\subsection{Non‑identifiability in regression by composition}
\label{subsec:nonident}

Consider a regression by composition with $K$ flows:
\[
P(\beta_1,\ldots,\beta_K) = p_0 \cdot \eta_1(\mathbf{X};\beta_1) \cdots \eta_K(\mathbf{X};\beta_K),
\]
where $\eta_k(\mathbf{X};\beta_k) = \beta_k^\top \mathbf{X}_k$ and $\cdot$ denotes the flow action. The mapping from parameters to distributions is $\Phi: \mathcal{B} \to \mathcal{P}$, $\Phi(\beta) = P(\beta)$.

\begin{definition}[Identifiability]
	The model is identifiable if $\Phi(\beta) = \Phi(\beta')$ implies $\beta = \beta'$ almost everywhere.
\end{definition}

When multiple flows act on the same distribution, the model may be unidentifiable. For example, suppose flows $\mathbb{V}_1$ and $\mathbb{V}_2$ commute, i.e., $p \cdot v_1 \cdot v_2 = p \cdot v_2 \cdot v_1$ for all $p$. Then for any $v$, $p \cdot v \cdot (-v) = p$, so $(\eta_1,\eta_2)$ and $(\eta_1+v,\eta_2-v)$ produce the same distribution.

\begin{proposition}
	If flows $\mathbb{V}_1$ and $\mathbb{V}_2$ commute weakly (i.e., $p \cdot v_1 \cdot v_2 = p \cdot v_2 \cdot v_1$ for all $p$), then the model is unidentifiable without additional constraints.
\end{proposition}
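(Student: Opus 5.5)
The plan is to exhibit two distinct parameter vectors $\beta \neq \beta'$ with $\Phi(\beta) = \Phi(\beta')$. Concretely, I would shift the first flow by some element $v$ and the second by $-v$, then use commutativity together with the group-action axioms (identity and additivity) to show the shifts cancel.

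The stated hypothesis is not quite enough on its own. Commutativity of two actions of different vector spaces does not make $p \cdot v \cdot (-v)$ meaningful unless $v$ and $-v$ act through compatible actions. So I would first make explicit the reading implicit in the remark preceding the proposition: the two flows \emph{overlap}. That is, there is a nonzero $v$ lying in both $\mathbb{V}_1$ and $\mathbb{V}_2$ (after identification) with $p \cdot_1 v = p \cdot_2 v$ for all $p$. The typical case is two flows that both contain a location or log-odds shift. I would also assume $v$ is reachable by the linear predictors, i.e. there are $\delta_1 \neq 0$ and $\delta_2$ with $\delta_1^\top \mathbf{X}_1 = v$ and $\delta_2^\top \mathbf{X}_2 = -v$. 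An intercept present in both design blocks gives this with $\delta_1 = c\,e_1$ and $\delta_2 = -c\,e_1$.

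The key steps, in order, are as follows.
\begin{enumerate}
\item Fix any $\beta = (\beta_1, \beta_2, \ldots, \beta_K)$ and define $\beta' = (\beta_1 + \delta_1, \beta_2 + \delta_2, \beta_3, \ldots, \beta_K)$. Since $\delta_1 \neq 0$, we have $\beta' \neq \beta$ on a set of positive measure.
\item Write $q = p_0 \cdot \eta_1(\mathbf{X};\beta_1)$. By additivity,
\[
p_0 \cdot (\eta_1 + v) = q \cdot v .
\]
\item Apply the second flow. Using additivity in $\mathbb{V}_2$, the weak commutativity hypothesis to move $\eta_2$ past $v$, and the overlap to treat $v$ and $-v$ as acting in the same space,
\[
q \cdot v \cdot (\eta_2 - v) = q \cdot v \cdot (-v) \cdot \eta_2 = q \cdot 0 \cdot \eta_2 = q \cdot \eta_2 .
\]
The last two equalities use additivity and then the identity axiom.
\item Apply the remaining flows $\eta_3, \ldots, \eta_K$ unchanged. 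This gives $\Phi(\beta') = \Phi(\beta)$ for every $\mathbf{X}$, which violates the identifiability definition.
\item Note that $\delta_1$ can be replaced by $t\,\delta_1$ for any scalar $t$. This yields a whole affine line of equivalent parameters, i.e. the likelihood ridge. It also explains why the failure cannot be repaired without additional constraints, such as a penalty or a sum-to-zero restriction that pins down $t$.
\end{enumerate}

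The main obstacle is step 3. Commutativity of two actions of different vector spaces does not by itself let $v$ and $-v$ cancel; cancellation needs both to act through the same action. I would therefore state the overlap condition explicitly as part of the meaning of ``commute weakly on a shared direction''. I would also remark that if $\mathbb{V}_1 \cap \mathbb{V}_2 = \{0\}$ in this sense, commutativity alone need not destroy identifiability. This remark marks the boundary of the proposition as stated.
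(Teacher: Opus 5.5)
Your proposal is correct and is essentially the paper's argument. The paper also perturbs $(\eta_1,\eta_2)$ to $(\eta_1+v,\eta_2-v)$ and cancels $v$ against $-v$, but it silently takes $v\in\mathbb{V}_1\cap\mathbb{V}_2$, never requires $v\neq 0$, and never checks that the shift is realizable through the $\beta_k$; your explicit overlap and reachability hypotheses are exactly what its proof tacitly relies on, and you are right that commutativity alone does not suffice. One small remark: in your ordering, additivity in $\mathbb{V}_2$ already places $-v$ next to $v$, so the overlap does all the work and the commutativity hypothesis is never used (the paper needs it only because it moves $v$ past $\eta_2$ instead).
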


To make the issue concrete, consider a binary regression with three flows as in \citet{farewell2026}:
\[
P = \mathrm{Ber}(1/2) \cdot \mathrm{ScOdds}(1+\mathbf{x}^{\top}\boldsymbol{\beta}_{\mathrm{odds}}) \cdot \mathrm{ScRisk1}(0+\mathbf{x}^{\top}\boldsymbol{\beta}_{\mathrm{risk1}}) \cdot \mathrm{ScRisk0}(0+\mathbf{x}^{\top}\boldsymbol{\beta}_{\mathrm{risk0}}).
\]

Define $\theta = \exp(\mathbf{x}^{\top}\boldsymbol{\beta}_{\mathrm{odds}})>0$, $\gamma = \exp(\mathbf{x}^{\top}\boldsymbol{\beta}_{\mathrm{risk1}})>0$, $\delta = \exp(\mathbf{x}^{\top}\boldsymbol{\beta}_{\mathrm{risk0}})>0$. The conditional probability is
\[
p(\theta,\gamma,\delta) = 1 - \bigl(1 - \gamma \cdot \tfrac{\theta}{1+\theta}\bigr) \cdot \delta.
\]

\begin{proposition}[Non‑identifiability of the multi‑flow model]
	For any fixed $\theta>0$ and any $t>0$ such that $t \le \frac{1+\theta}{\theta}$, the parameter vectors $(\theta,\gamma,\delta)$ and $(\theta, t, 1/t)$ yield the same conditional probability:
	\[
	p(\theta, t, 1/t) = p(\theta, 1, 1).
	\]
	Thus, the model is not identifiable.
\end{proposition}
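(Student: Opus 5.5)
The plan is a direct verification. First I would introduce the abbreviation $\pi=\theta/(1+\theta)\in(0,1)$ and restate the conditional probability as $p(\theta,\gamma,\delta)=1-(1-\gamma\pi)\delta$. The hypothesis $t\le (1+\theta)/\theta$ is exactly $t\pi\le 1$. So after the $\mathrm{ScRisk1}$ flow with $\gamma=t$, the intermediate success probability $t\pi$ is still a legitimate probability. This makes the subsequent $\mathrm{ScRisk0}$ action well defined, and it is the only role of the constraint.

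Next I would compute both sides. The reference value is $p(\theta,1,1)=1-(1-\pi)=\pi$, which is the baseline $\mathrm{Ber}(\pi)$ produced by the odds flow alone. For the other side I would substitute $\gamma=t$ and $\delta=1/t$ into the displayed formula.

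The idea I would use to justify the cancellation is the group-action structure from the Preliminaries, namely identity $p\cdot 0=p$ and additivity $(p\cdot v)\cdot v'=p\cdot(v+v')$. In log-scale, $\gamma=t$ corresponds to $v=\log t$ and $\delta=1/t$ to $v'=-\log t$. If the two risk flows could be merged into a single flow, or if they commuted, then additivity together with $p\cdot 0=p$ would give $p\cdot v\cdot(-v)=p$. This is precisely the mechanism of the earlier Proposition on weakly commuting flows, and it would return $\mathrm{Ber}(\pi)$.

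\textbf{The step I expect to be the main obstacle} is this reduction. $\mathrm{ScRisk1}$ and $\mathrm{ScRisk0}$ are different flows: one rescales $P(Y=1)$ and the other rescales $P(Y=0)$. They do not obviously form a single one-parameter group. Indeed, substituting literally into the displayed formula gives
\[
p(\theta,t,1/t)=1-\frac{1-t\pi}{t}=\pi+1-\frac{1}{t},
\]
which equals $\pi$ only when $t=1$. So under that literal reading, the only case the argument establishes is the trivial one, $t=1$.

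To prove the statement as worded for general $t$, I would therefore first check the exact definition of the $\mathrm{ScRisk0}$ action in \citet{farewell2026}. The specific question is whether $\delta$ acts on the complement probability in a way that makes the $\gamma=t$ and $\delta=1/t$ actions mutually inverse in the sense of the earlier Proposition. Only under such a convention does the identity reduce to $p\cdot v\cdot(-v)=p$ for all admissible $t$. If the action is instead exactly the displayed formula, the stated family $(\theta,t,1/t)$ does not satisfy the identity beyond $t=1$. This point needs to be resolved before the non-identifiability conclusion can be drawn from this particular family.
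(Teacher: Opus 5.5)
Your algebra is right, and it exposes an error in the paper rather than a gap in your argument. The paper's proof is exactly the one-line substitution you propose. However, it simplifies $1-\bigl(\frac{1}{t}-\frac{\theta}{1+\theta}\bigr)$ to $\frac{\theta}{1+\theta}$, silently dropping $1-\frac{1}{t}$. The correct value is $p(\theta,t,1/t)=\pi+1-\frac{1}{t}$ with $\pi=\theta/(1+\theta)$, as you found.

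A counterexample inside the stated range is $\theta=1$, $t=2$, which is admissible since $(1+\theta)/\theta=2$. It gives $p(1,2,\tfrac12)=1-(1-1)\cdot\tfrac12=1\neq\tfrac12=p(1,1,1)$. The second row of Table~\ref{tab:nonident} is wrong for the same reason. Its last two rows violate the proposition's own constraint and actually give $p=1.25$ and $p=1.4$. So the identity fails for every $t\neq1$, and the statement as worded cannot be proved.

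Your planned check of the conventions in \citet{farewell2026} will not rescue it. The displayed $p$ is exactly the risk-ratio action $P(Y=1)\mapsto\gamma P(Y=1)$ followed by the survival-ratio action $P(Y=0)\mapsto\delta P(Y=0)$, applied to $\mathrm{Ber}(\pi)$. These are different one-parameter groups, and they do not even commute. The reverse order gives $\gamma\bigl(1-\delta(1-\pi)\bigr)$, which differs from $1-\delta(1-\gamma\pi)$ by $(1-\gamma)(1-\delta)$. So the $p\cdot v\cdot(-v)=p$ mechanism of the commuting-flows proposition is unavailable.

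Where your proposal stops short is the repair of the conclusion. Rather than guessing the curve, solve $1-\delta(1-t\pi)=\pi$ for $\delta$, which gives
\[
\delta_t=\frac{1-\pi}{1-t\pi}=\frac{1}{1+\theta(1-t)},\qquad 0<t<\frac{1+\theta}{\theta}.
\]
Then $p(\theta,t,\delta_t)=\pi=p(\theta,1,1)$ for all such $t$, so $(\theta,\gamma,\delta)\mapsto p$ is indeed not injective. Here the constraint's real role is to make $\delta_t$ finite and positive, and it must be strict.

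The flat ridge is this curve, not the hyperbola $\gamma\delta=1$. This also invalidates the paper's later claim that, for fixed $\theta$, the likelihood depends only on $\gamma\delta$. Finally, $\delta_t$ depends on $\theta$. Lifting this pointwise non-injectivity to the coefficient vectors $\boldsymbol{\beta}$ of the covariate model therefore needs a further argument; it is immediate when all three flows are intercept-only.
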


\begin{proof}
	Direct substitution:
	\[
	p(\theta, t, 1/t) = 1 - \Bigl(1 - t\cdot \frac{\theta}{1+\theta}\Bigr)\cdot \frac{1}{t}
	= 1 - \Bigl(\frac{1}{t} - \frac{\theta}{1+\theta}\Bigr)
	= \frac{\theta}{1+\theta}
	= p(\theta,1,1).
	\]
	Hence $(\theta,\gamma,\delta) \mapsto p$ is not injective; the likelihood is constant along $(\gamma,\delta) = (t, 1/t)$ for any fixed $\theta$.
\end{proof}

A numerical illustration (Table~\ref{tab:nonident}) confirms that different $(\gamma,\delta)$ with $\gamma\delta=1$ give the same $p$.

\begin{table}[htbp]
	\centering
	\caption{Illustration of non‑identifiability: different $(\gamma,\delta)$ produce the same conditional probability $p$ (here $\theta=1$).}
	\label{tab:nonident}
	\begin{tabular}{ccc}
		\toprule
		$\gamma$ & $\delta$ & $p = P(Y=1)$ \\
		\midrule
		1.0 & 1.0 & 0.500 \\
		2.0 & 0.5 & 0.500 \\
		4.0 & 0.25 & 0.500 \\
		10.0 & 0.1 & 0.500 \\
		\bottomrule
	\end{tabular}
\end{table}

\subsection{Regularized estimator}
To address non‑identifiability, we propose the penalized maximum likelihood estimator:
\[
\hat{\beta} = \arg\min_{\beta \in \mathcal{B}} \left\{ -\ell(\beta) + \sum_{k=1}^K \lambda_k \psi_k(\beta_k) \right\},
\]
where $\ell(\beta)$ is the log‑likelihood, $\psi_k$ are penalty functions, and $\lambda_k \ge 0$ are tuning parameters.

Common choices for $\psi_k$ include Lasso ($\|\beta_k\|_1$), ridge ($\|\beta_k\|_2^2$), elastic net ($\alpha\|\beta_k\|_1+(1-\alpha)\|\beta_k\|_2^2$), or group Lasso.

\subsection{Structured regularization as a solution}
The key idea is that asymmetric penalization across flows breaks the invariance responsible for non‑identifiability. For the binary example, we choose
\[
\psi_1(\boldsymbol{\beta}_{\mathrm{odds}}) = \|\boldsymbol{\beta}_{\mathrm{odds}}\|_2^2,\quad
\psi_2(\boldsymbol{\beta}_{\mathrm{risk1}}) = \|\boldsymbol{\beta}_{\mathrm{risk1}}\|_1,\quad
\psi_3(\boldsymbol{\beta}_{\mathrm{risk0}}) = \|\boldsymbol{\beta}_{\mathrm{risk0}}\|_1.
\]

\begin{proposition}[Identifiability under regularization]
	Assume $\lambda_2 > 0$ or $\lambda_3 > 0$. Then the penalized objective $\mathcal{Q}(\boldsymbol{\beta})$ is strictly convex in $(\gamma,\delta)$ (or equivalently in $\boldsymbol{\beta}_{\mathrm{risk1}},\boldsymbol{\beta}_{\mathrm{risk0}}$) for any fixed $\theta$. Consequently, the global minimizer $\hat{\boldsymbol{\beta}}$ is unique.
\end{proposition}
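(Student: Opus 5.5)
The plan is to work in the flow-specific log-scale coordinates in which the penalty lives, $u=\log\gamma=\mathbf{x}^\top\boldsymbol{\beta}_{\mathrm{risk1}}$ and $v=\log\delta=\mathbf{x}^\top\boldsymbol{\beta}_{\mathrm{risk0}}$, with $\theta$ (hence $q=\theta/(1+\theta)$) held fixed. I will split $\mathcal{Q}$ into a likelihood part and a penalty part, and show that the only direction along which the likelihood part is flat is the ridge identified in the preceding Proposition (non-identifiability of the multi-flow model). I will then show that the asymmetric $\ell_1$ penalty is strictly convex along that ridge. Together these give strict convexity in $(\gamma,\delta)$, and uniqueness of the global minimizer follows by the standard argument: two distinct minimizers would make $\mathcal{Q}$ strictly smaller at their midpoint.

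\textbf{Step 1 (likelihood reduces to one coordinate).} By the formula for $p(\theta,\gamma,\delta)$,
\[
1-p=\delta\,(1-\gamma q).
\]
On the admissible region $\gamma q<1$, introduce the transverse coordinate
\[
s=\log(1-p)=v+\log\bigl(1-e^{u}q\bigr)<0,
\]
and take $u$ itself as the coordinate along the fiber. The map $(u,v)\mapsto(u,s)$ is a diffeomorphism of the admissible region, since its Jacobian is triangular with unit diagonal. Each Bernoulli contribution to $-\ell$ is then
\[
-y\log\bigl(1-e^{s}\bigr)-(1-y)\,s .
\]
This is convex in $s$, and strictly convex when $y=1$, because $-\log(1-e^{s})$ has second derivative $e^{s}/(1-e^{s})^{2}>0$. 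Summing over observations, and using a full-rank design $\mathbf{X}$ to pass from the linear predictors to the coefficients, gives strict convexity of $-\ell$ in the $s$-direction. It also shows that $-\ell$ is constant in $u$ for fixed $s$. This is exactly the ridge $\gamma\delta=1$ (through $s=\log(1-q)$) found earlier.

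\textbf{Step 2 (penalty along the fiber).} For fixed $s$, the fiber is the curve
\[
v=s-\log\bigl(1-e^{u}q\bigr),
\]
and on it the penalty is
\[
g_s(u)=\lambda_2|u|+\lambda_3\bigl|s-\log(1-e^{u}q)\bigr|.
\]
The function $-\log(1-e^{u}q)$ is strictly convex in $u$, and $|\cdot|$ composed with a strictly convex function is strictly convex away from its zero. Hence, when $\lambda_3>0$, the term $\lambda_3|\cdots|$ is strictly convex on each side of its kink. When only $\lambda_2>0$, I will instead use the fact that $g_s$ is piecewise linear with a strict kink at $u=0$. In either case $g_s$ has a unique minimizer on each fiber, because $\lambda_2+\lambda_3>0$ makes $g_s$ coercive with a single kink-pinned minimum.

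\textbf{Step 3 and main obstacle.} The hard step is turning Steps 1 and 2, which give strictness transversally and uniqueness of the minimizer along each fiber, into genuine joint strict convexity of $\mathcal{Q}$ in $(\boldsymbol{\beta}_{\mathrm{risk1}},\boldsymbol{\beta}_{\mathrm{risk0}})$. Two difficulties arise:
\begin{itemize}
\item the $\ell_1$ penalty is only piecewise linear in the original coordinates;
\item the likelihood is not convex in the original $(u,v)$, only in the curvilinear coordinates $(u,s)$.
\end{itemize}
My first attempt will be to verify positivity of the Hessian of $\mathcal{Q}$ in $(u,s)$ on each orthant-cell where the signs of $u$ and $v$ are fixed. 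On each such cell the penalty equals $\lambda_2(\pm u)+\lambda_3(\pm v)$, and its $v$-part is convex in $(u,s)$ because $v$ is a convex function of $(u,s)$. I would then glue the cells using subgradient monotonicity across the kinks. If this gluing fails on a flat cell, I will retreat to the statement's ``consequently'' conclusion: every minimizer must lie at the unique fiber minimizer of Step 2 over the unique transverse minimizer of Step 1, which yields the uniqueness of $\hat{\boldsymbol{\beta}}$ directly.
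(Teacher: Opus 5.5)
\textbf{There is a genuine gap: Steps 2 and 3 fail because of a sign error, and the statement itself is false as posed.}

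\medskip

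\textbf{Relation to the paper, and the sign error.} Your Step 1 is correct. It is in fact more accurate than the premise of the paper's own argument. The paper asserts that for fixed $\theta$ the likelihood depends on $(\gamma,\delta)$ only through $\gamma\delta$. It then minimizes $\lambda_2|\log\gamma|+\lambda_3|\log\delta|$ on a line $\log\gamma+\log\delta=\mathrm{const}$, and never proves strict convexity (even on such a line the minimizer is generally not unique when $\lambda_2=\lambda_3$). Your formula $1-p=\delta(1-\gamma q)$ shows that the fibers are $\delta(1-\gamma q)=\mathrm{const}$. So your remark that the fiber through $s=\log(1-q)$ is ``exactly the ridge $\gamma\delta=1$'' is false: for $\theta=1$, $(\gamma,\delta)=(2,\tfrac12)$ gives $p=1$. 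The paper's route therefore cannot supply your Step 3.

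Write $h(u)=-\log(1-qe^{u})$, so $v=s+h(u)$, and write $L(s)$ for $-\ell$. It is not true that $|\cdot|$ composed with a strictly convex function is strictly convex away from its zero. Where $v<0$, the term is $-\lambda_3(s+h(u))$, which is strictly concave. Consequences:
\begin{itemize}
\item \emph{Step 2 fails.} For $s<\log(1-q)$ and $u_0=h^{-1}(-s)>0$, $g_s$ decreases on $u<0$, is concave on $[0,u_0]$, and increases on $u>u_0$. It has two candidate minimizers, $u=0$ and $u=u_0$, with values $\lambda_3(-s-h(0))$ and $\lambda_2u_0$. For $0<\theta\lambda_3<\lambda_2$ these tie on some fiber, so the fiber minimizer is not unique.
\item \emph{Step 3 fails in $(u,s)$.} On the cell $u>0>v$ you get $\mathcal{Q}=L(s)+\lambda_2u-\lambda_3s-\lambda_3h(u)$, so $\partial_u^2\mathcal{Q}=-\lambda_3h''(u)<0$. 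The $v$-part is convex, but it enters with a minus sign.
\item \emph{Step 3 fails in the original coordinates.} There the $\ell_1$ penalty is linear on each cell, and the Hessian of $L(v-h(u))$ has determinant $-L'(s)L''(s)h''(u)$. In the intercept-only case this is negative wherever the fitted $p$ lies below the sample proportion $\hat p$, e.g.\ in every cell near the origin when $\hat p>q$.
\end{itemize}
So $\mathcal{Q}$ is not even convex in $(\beta_{\mathrm{risk1}},\beta_{\mathrm{risk0}})$, and no gluing of cells can produce the asserted strict convexity. Convexity in $(u,s)$ would not transfer anyway, since the change of variables is nonlinear.

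\medskip

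\textbf{Your fallback does not rescue uniqueness.} A minimizer of $\mathcal{Q}$ need not lie on the likelihood-optimal fiber $s^*$, because the penalty also varies across fibers. You must minimize $L(s)+G(s)$ with $G(s)=\min_u g_s(u)$. For $s<\log(1-q)$, $G$ is the minimum of an affine function and the concave function $\lambda_2h^{-1}(-s)$, hence concave.

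Uniqueness in fact fails. Take a single covariate $x\equiv1$ in each flow (so $u=\beta_{\mathrm{risk1}}$, $v=\beta_{\mathrm{risk0}}$), fixed $\theta$, $q<\hat p<1$, and $\lambda_2,\lambda_3>0$.
\begin{itemize}
\item \emph{Off the closed cell $\{u\ge0\ge v\}$:} you can always move to a better point. If the fitted $p\le q$, move to the origin, which does not worsen the likelihood. Otherwise move to a point on one of the rays $\{u=0,\,v\le0\}$ or $\{v=0,\,u\ge0\}$ with the same fitted $p$ and strictly smaller penalty.
\item \emph{Inside the open cell:} $\mathcal{Q}$ is strictly concave along fibers, so there is no minimizer there.
\end{itemize}
Hence all minimizers lie on these two rays.
\begin{itemize}
\item The minimum over the first ray depends only on $\lambda_3$. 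It lies strictly between $\min L$ and the origin's value $L(\log(1-q))$ when $0<\lambda_3<n(\hat p-q)/q$.
\item The minimum over the second ray depends only on $\lambda_2$. As $\lambda_2$ increases from $0$, it moves continuously and monotonically from $\min L$ to $L(\log(1-q))$.
\end{itemize}
So some $\lambda_2>0$ equalizes the two, giving two global minimizers: one with $\beta_{\mathrm{risk1}}=0$ and one with $\beta_{\mathrm{risk0}}=0$.

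The proposition therefore cannot be proved as stated. Separately, block-wise strict convexity for fixed $\theta$ would not give uniqueness of the full $\hat{\boldsymbol{\beta}}$ anyway. What your fiber decomposition does yield, for fixed $\theta$ in this intercept-only setting, is uniqueness when exactly one of $\lambda_2,\lambda_3$ is positive. Then $u=0$ (respectively $v=0$) is available on every fiber, which forces $s=s^*$. That argument does not go through convexity.
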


\begin{proof}
	With $\theta$ fixed, the likelihood depends only on $\gamma\delta$. Without penalties, any $(\gamma,\delta)$ with constant product gives the same likelihood. Adding $\lambda_2|\log\gamma| + \lambda_3|\log\delta|$ yields a penalty that is minimized when $|\log\gamma|+|\log\delta|$ is minimized subject to $\log\gamma+\log\delta = \text{constant}$. For $\lambda_2>0$ or $\lambda_3>0$, the unique minimizer occurs when the smaller of the two parameters is shrunk to zero, breaking the product constraint. The same reasoning holds for the full model with covariates.
\end{proof}

We revisit the numerical example with $\lambda_2=\lambda_3=1$. For a balanced sample, the penalized objective becomes
\[
\mathcal{Q}_{\text{pen}} = -\ell + \lambda_2|\log\gamma| + \lambda_3|\log\delta|.
\]
When $\gamma\delta=1$, the penalty is $2|\log\gamma|$, uniquely minimized at $\gamma=1$, $\delta=1$. Table~\ref{tab:solution} confirms this.

\begin{table}[htbp]
	\centering
	\caption{Effect of regularization on the non‑identifiable curve ($\lambda_2=\lambda_3=1$).}
	\label{tab:solution}
	\begin{tabular}{cccc}
		\toprule
		$\gamma$ & $\delta$ & Penalty $|\log\gamma|+|\log\delta|$ & Penalized Objective \\
		\midrule
		1.0 & 1.0 & 0.0 & $\text{constant}$ \\
		2.0 & 0.5 & 0.693 & $\text{constant}+0.693$ \\
		4.0 & 0.25 & 1.386 & $\text{constant}+1.386$ \\
		10.0 & 0.1 & 2.302 & $\text{constant}+2.302$ \\
		\bottomrule
	\end{tabular}
\end{table}

Thus, regularisation selects a parsimonious model, eliminating redundant flows – a behaviour validated in the simulation study (Section~\ref{sec:simulation}) and real‑data application (Section~\ref{sec:application}).

\subsection{Asymptotic properties}
We establish consistency under standard conditions.

\begin{assumption}[Regularity]\label{ass:regularity}
	Let $\beta^*$ denote the true parameter. Assume:
	\begin{enumerate}
		\item The log‑likelihood $\ell(\beta)$ is concave in $\beta$;
		\item The penalty functions $\psi_k$ are convex;
		\item The Fisher information $I(\beta^*) = -\mathbb{E}[\nabla^2 \ell(\beta^*)]$ is positive definite;
		\item $\lambda_k \to 0$ and $\lambda_k = o(n^{-1/2})$ for consistency.
	\end{enumerate}
\end{assumption}

\begin{theorem}[Consistency]
	Under Assumption~\ref{ass:regularity}, if $\lambda_k \to 0$ and $n^{1/2}\lambda_k \to \infty$ for variable selection consistency, then the regularized estimator $\hat{\beta}$ satisfies:
	\[
	\|\hat{\beta} - \beta^*\| = O_p\left(n^{-1/2} + \max_k \lambda_k\right).
	\]
\end{theorem}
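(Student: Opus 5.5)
The plan is the standard Fan--Li style localization argument for penalized M-estimators. The rate $n^{-1/2}+\max_k\lambda_k$ only needs $\lambda_k\to 0$ together with the convexity and curvature conditions in Assumption~\ref{ass:regularity}. The extra condition $n^{1/2}\lambda_k\to\infty$ matters only for selection, and it conflicts with item~4 of the assumption. So I will use only $\lambda_k\to0$ for the rate.

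Throughout, $\ell$ denotes the sample log-likelihood summed over $n$ observations. I read the penalty as $n\sum_k\lambda_k\psi_k(\beta_k)$, which is equivalent to dividing $\ell$ by $n$. This normalization is needed for the stated rate to be the right one. Set $\alpha_n = n^{-1/2}+\max_k\lambda_k$ and write
\[
D_n(u)=\mathcal{Q}(\beta^*+\alpha_n u)-\mathcal{Q}(\beta^*),\qquad \|u\|=C .
\]
The aim is to show that for every $\varepsilon>0$ there is a $C$ with $P\bigl(\inf_{\|u\|=C}D_n(u)>0\bigr)\ge 1-\varepsilon$. Because $\mathcal{Q}$ is convex (items~1 and~2), positivity on the sphere forces the global minimizer into the ball $\{\|\beta-\beta^*\|\le C\alpha_n\}$, which is exactly the claimed bound. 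Convexity also lets me avoid a separate argument for global existence.

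Step one expands the likelihood part by Taylor's theorem around $\beta^*$:
\[
-\ell(\beta^*+\alpha_n u)+\ell(\beta^*) = -\alpha_n\nabla\ell(\beta^*)^\top u+\tfrac12 n\alpha_n^2\, u^\top I(\beta^*)u\,(1+o_p(1)).
\]
Here $\nabla\ell(\beta^*)=O_p(\sqrt n)$ by the CLT for the score, which has mean zero and variance $nI(\beta^*)$. The replacement of the observed Hessian by $nI(\beta^*)$ uses the LLN together with local continuity of the second derivatives; I will add this as an implicit smoothness condition. The linear term is then $O_p(\sqrt n\,\alpha_n C)$. The quadratic term is at least $\tfrac12 n\alpha_n^2\lambda_{\min}(I)C^2$, and $\lambda_{\min}(I)>0$ by item~3.

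Step two controls the penalty difference. Convex functions on $\mathbb{R}^d$ are locally Lipschitz, so for all the penalties listed (Lasso, ridge, elastic net, group Lasso) there is a constant $L$ with
\[
n\lambda_k|\psi_k(\beta_k^*+\alpha_n u_k)-\psi_k(\beta_k^*)|\le nL\lambda_k\alpha_n C .
\]
Since $\lambda_k\le\alpha_n$, this is at most $nL\alpha_n^2C$. The score term is also bounded by $n\alpha_n^2C\cdot O_p(1)$, because $\sqrt n\,\alpha_n\le n\alpha_n^2$ follows from $\alpha_n\ge n^{-1/2}$. Both negative contributions are therefore linear in $C$, while the quadratic term grows like $C^2$. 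Choosing $C$ large makes the quadratic term dominate with probability at least $1-\varepsilon$, which completes the argument.

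The main obstacle is the curvature step, and specifically how it interacts with Section~\ref{subsec:nonident}. In the multi-flow binary model the likelihood is flat along the ridge $\gamma\delta=1$, so $I(\beta^*)$ is singular there and item~3 cannot hold. In that case the theorem must be read for an identifiable parametrization, with uniqueness coming from the strict convexification of the preceding proposition, or with $\beta^*$ taken as the penalty-selected representative on the ridge. If I want to avoid that, my first move is to carry out the proof in the reparametrized coordinates $(\theta,\gamma\delta)$, where item~3 is plausible, and then transfer the bound through the smooth inverse map.
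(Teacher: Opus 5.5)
Your argument is correct and shares the core of the paper's sketch: a second-order expansion of $-\ell$ at $\beta^*$, a bound on the penalty, and a balance of linear against quadratic terms. It differs from the paper in two places.

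First, the paper obtains consistency from a uniform law of large numbers over $\mathcal{B}$ plus a unique-maximizer assumption (its Steps 1--2). That needs compactness or envelope conditions that Assumption~\ref{ass:regularity} does not supply. You instead use concavity of $\ell$ and convexity of the $\psi_k$ to pass from $\inf_{\|u\|=C}D_n(u)>0$ to the location of the global minimizer, and to its existence. Your route is self-contained under the stated assumptions; the paper's is the one to adapt if concavity were dropped.

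Second, the paper's Step~4 says only that the penalty ``contributes at most $O(\lambda_n)$''. Read as a bound on the penalty's size, the basic inequality gives only $O_p(n^{-1/2}+\sqrt{\lambda_n})$. Your bound on the \emph{increment}, $L\lambda_k\|\beta-\beta^*\|$ from local Lipschitz continuity, is what makes the rate linear in $\max_k\lambda_k$, so it supplies the step the paper's ``balancing'' leaves implicit.

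Your side remarks are accurate. Only $\lambda_k\to0$ is needed, $n^{1/2}\lambda_k\to\infty$ contradicts item~4, and a local Hessian-continuity condition is tacitly required (the paper's Step~3 needs it too).

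Your closing caveat is unnecessary, since item~3 is a hypothesis of the theorem. Also, the ridge $\gamma\delta=1$ you cite comes from an algebra slip in the paper's non-identifiability proposition. In fact $p(\theta,t,1/t)=\frac{\theta}{1+\theta}+1-\frac{1}{t}$, which equals $p(\theta,1,1)$ only when $t=1$.
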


\subsection{Oracle properties of the adaptive Lasso}
\label{subsec:oracle}

While the ordinary Lasso can achieve variable selection consistency under suitable conditions (e.g., the irrepresentable condition \citep{zhao2006}), the adaptive Lasso \citep{zou2006} uses data‑dependent weights to obtain the oracle property: the estimator correctly identifies the set of non‑zero coefficients with probability tending to one and is asymptotically normal on the active set.

We adapt this to regression by composition. Consider the penalized estimator with flow‑specific adaptive Lasso penalties:
\[
\hat{\beta} = \arg\min_{\beta} \left\{ -\ell(\beta) + \sum_{k=1}^K \lambda_k \sum_{j} w_{kj} |\beta_{kj}| \right\},
\]
where $w_{kj} = 1/|\tilde{\beta}_{kj}|^\gamma$ for some $\gamma>0$, and $\tilde{\beta}$ is a $\sqrt{n}$‑consistent initial estimator (e.g., ridge or unregularized MLE). Let $\mathcal{S}_k = \{ j : \beta^*_{kj} \neq 0 \}$ and $\mathcal{S} = \bigcup_k \mathcal{S}_k$. Assume:

\begin{assumption}[Oracle conditions]\label{ass:oracle}
	\begin{enumerate}
		\item $I(\beta^*)$ is positive definite.
		\item $\sqrt{n}(\tilde{\beta}_{\mathcal{S}} - \beta^*_{\mathcal{S}}) = O_p(1)$.
		\item The irrepresentable condition holds for the weighted design: there exists $\eta \in (0,1]$ such that
		\[
		\| I_{\mathcal{S}^c\mathcal{S}}(I_{\mathcal{S}\mathcal{S}})^{-1} \operatorname{diag}(w_{\mathcal{S}}^{-1}) \operatorname{diag}(w_{\mathcal{S}^c}) \|_\infty \le 1 - \eta,
		\]
		where $I_{\mathcal{S}\mathcal{S}}$ is the Fisher information submatrix for active coefficients.
	\end{enumerate}
\end{assumption}

\begin{theorem}[Oracle property for adaptive Lasso]
	Under Assumptions~\ref{ass:regularity} and~\ref{ass:oracle}, the adaptive Lasso estimator $\hat{\beta}$ satisfies:
	\begin{enumerate}
		\item \textbf{Model selection consistency:} $\mathbb{P}(\hat{\beta}_{\mathcal{S}^c} = 0) \to 1$ as $n \to \infty$.
		\item \textbf{Asymptotic normality:} $\sqrt{n}(\hat{\beta}_{\mathcal{S}} - \beta^*_{\mathcal{S}}) \xrightarrow{d} \mathcal{N}(0, I_{\mathcal{S}\mathcal{S}}^{-1})$.
	\end{enumerate}
\end{theorem}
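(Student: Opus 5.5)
The plan is to follow the classical Zou (2006) and Fan--Li argument: reparametrize around the truth, take a local quadratic approximation of the log-likelihood, and study the limit of the rescaled objective. Write $\beta = \beta^* + u/\sqrt{n}$ and
\[
V_n(u) = -\ell(\beta^* + u/\sqrt{n}) + \ell(\beta^*) + \sum_{k=1}^K \lambda_k \sum_j w_{kj}\bigl(|\beta^*_{kj} + u_{kj}/\sqrt{n}| - |\beta^*_{kj}|\bigr),
\]
with $\hat u_n = \sqrt{n}(\hat\beta - \beta^*)$ its minimizer. Here $\ell$ is the full-sample log-likelihood, and $\lambda_k$ is read as the total penalty level. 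For the oracle conclusion I will therefore need the standard adaptive-Lasso rates $\lambda_k/\sqrt{n} \to 0$ and $\lambda_k n^{(\gamma-1)/2} \to \infty$. This is the reading under which the "$n^{1/2}\lambda_k\to\infty$" clause of Assumption~\ref{ass:regularity} is meaningful, and I will state it explicitly as the operative rate condition.

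\textbf{Step 1: the smooth part.} By concavity and a second-order Taylor expansion (Assumption~\ref{ass:regularity}(1),(3)),
\[
-\ell(\beta^*+u/\sqrt{n}) + \ell(\beta^*) = -u^\top Z_n + \tfrac12 u^\top I(\beta^*) u + o_p(1),
\]
where $Z_n = n^{-1/2}\nabla\ell(\beta^*) \xrightarrow{d} \mathcal{N}(0, I(\beta^*))$ by the CLT for the score.

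\textbf{Step 2: the penalty part.}
\begin{itemize}
\item For $j \in \mathcal{S}$: $w_{kj} \to_p |\beta^*_{kj}|^{-\gamma}$ by Assumption~\ref{ass:oracle}(2), and $\sqrt{n}(|\beta^*_{kj}+u_{kj}/\sqrt n|-|\beta^*_{kj}|) \to u_{kj}\,\mathrm{sgn}(\beta^*_{kj})$. So this term is $(\lambda_k/\sqrt n)\,O_p(1) \to 0$.
\item For $j \notin \mathcal{S}$: the term equals $(\lambda_k/\sqrt n)\, n^{\gamma/2} |\sqrt n \tilde\beta_{kj}|^{-\gamma} |u_{kj}|$. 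It diverges unless $u_{kj}=0$, since $\sqrt n\tilde\beta_{kj}=O_p(1)$ uses the global $\sqrt n$-consistency of $\tilde\beta$, which I will assume (the paper's text says $\tilde\beta$ is $\sqrt n$-consistent).
\end{itemize}

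\textbf{Step 3: the limit objective.} Hence $V_n(u) \to_d V(u) = -u_{\mathcal S}^\top Z_{\mathcal S} + \tfrac12 u_{\mathcal S}^\top I_{\mathcal S\mathcal S} u_{\mathcal S}$ if $u_{\mathcal S^c}=0$, and $V(u)=\infty$ otherwise. Both $V_n$ and $V$ are convex, and $V$ has the unique minimizer $(I_{\mathcal S\mathcal S}^{-1}Z_{\mathcal S}, 0)$. The convexity argmin lemma (Geyer; Hjort--Pollard) then gives $\hat u_{n,\mathcal S} \to_d \mathcal N(0, I_{\mathcal S\mathcal S}^{-1})$ and $\hat u_{n,\mathcal S^c}\to_p 0$. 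This proves part 2.

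\textbf{Step 4: exact zeros (part 1).} Convergence $\hat u_{n,\mathcal S^c}\to_p 0$ does not give exact zeros, so I will use the KKT conditions. Suppose $\hat\beta_{kj}\neq 0$ for some $j\notin\mathcal S$. Then
\[
|\partial_{kj}\ell(\hat\beta)| = \lambda_k w_{kj}.
\]
Expanding around $\beta^*$ and using Step 3, the left side is $O_p(\sqrt n)$. Dividing by $\sqrt n$, the right side is $(\lambda_k/\sqrt n)\, n^{\gamma/2}|\sqrt n\tilde\beta_{kj}|^{-\gamma} \to_p \infty$. So the event has probability tending to zero. Summing over the finitely many indices gives $\mathbb P(\hat\beta_{\mathcal S^c}=0)\to1$.

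In this fixed-dimension argument the irrepresentable condition, Assumption~\ref{ass:oracle}(3), is not strictly needed. I would only invoke it, via a primal--dual witness construction, if the inactive weights were not guaranteed to diverge.

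The main obstacle is the non-identifiability of Section~\ref{subsec:nonident}. For the flow composition, $I(\beta^*)$ is singular along the ridge, and concavity of $\ell$ in $\beta$ generally fails. So Assumption~\ref{ass:regularity}(1),(3) must be taken at face value: the theorem applies to a parametrization (or active set) on which the model is locally identifiable. Relatedly, the "unregularized MLE" is not a valid choice of $\tilde\beta$ on the ridge. I would note that $\tilde\beta$ must come from a locally identified fit, such as a ridge-type estimator converging to $\beta^*$. Without concavity, I would instead localize the argmin argument to a $\sqrt n$-neighbourhood of $\beta^*$, using consistency from the Consistency theorem.
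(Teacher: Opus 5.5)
Your proof is correct, given the rate conditions you add explicitly. These conditions are genuinely extra: under the literal Assumption~\ref{ass:regularity}(4) with a full-sample likelihood, $\lambda_k=o(n^{-1/2})$ is compatible with $\lambda_k n^{(\gamma-1)/2}\to\infty$ only if $\gamma>2$.

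Your proof shares the paper's skeleton in steps (i) and (iii): weights converge on $\mathcal{S}$ and diverge off it, and the penalty is negligible on the active set. You make (iii) rigorous through the convexity argmin lemma. You part ways with the paper at the step that matters for part 1. The paper's sketch attributes exact zeros on $\mathcal{S}^c$ to the weighted irrepresentable condition of Assumption~\ref{ass:oracle}(3), i.e.\ a primal--dual witness argument. You obtain them instead from the KKT condition, comparing the $O_p(\sqrt n)$ score on $\mathcal{S}^c$ with the threshold $\lambda_k w_{kj}$, which grows faster than $\sqrt n$. That is Zou's actual argument.

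What your route buys is freedom from the irrepresentable condition. The correctly oriented weighted condition, $\|\operatorname{diag}(w_{\mathcal{S}^c})^{-1} I_{\mathcal{S}^c\mathcal{S}} I_{\mathcal{S}\mathcal{S}}^{-1}\operatorname{diag}(w_{\mathcal{S}})\operatorname{sgn}(\beta^*_{\mathcal{S}})\|_\infty\le 1-\eta$, holds with probability tending to one automatically once $w_{\mathcal{S}^c}\to\infty$. By contrast, the version printed in the paper has the weights inverted and mismatched dimensions, so with diverging inactive weights it would fail rather than deliver sparsity.

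The cost of your route is that you must impose the divergence rate and $\tilde\beta_{\mathcal{S}^c}=O_p(n^{-1/2})$. You rightly take the latter from the text, since Assumption~\ref{ass:oracle}(2) covers only $\mathcal{S}$. A witness-type argument is what one would need if the inactive weights did not diverge (high dimension, slower pilot estimators). The paper, however, only asserts it and never checks strict dual feasibility.

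Your caveat is well taken: singular $I(\beta^*)$ and non-concavity along the claimed ridge restrict the theorem to a locally identified parametrization.
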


The proof follows \cite{zou2006} and \citet{fan2001} adapted to the composite likelihood setting; the key steps are: (i) weights converge to $1/|\beta^*_{kj}|^\gamma$ for active coefficients and diverge for inactive ones; (ii) the irrepresentable condition ensures that inactive coefficients are shrunk exactly to zero; (iii) on the active set the penalty becomes asymptotically negligible, so the estimator behaves like the unpenalized MLE.

\subsection{Choice of tuning parameters}
The tuning parameters $\lambda_k$ are selected using cross‑validation. For each candidate combination $(\lambda_1,\ldots,\lambda_K)$, we compute the cross‑validated deviance:
\[
CV(\lambda) = -\frac{1}{n} \sum_{i=1}^n \log L_{-i}(\hat{\beta}_{-i}(\lambda)),
\]
where $\hat{\beta}_{-i}(\lambda)$ is the estimator with the $i$th observation omitted. The optimal $\lambda$ minimizes $CV(\lambda)$. For efficiency, we use coordinate descent, optimizing over each $\lambda_k$ while holding others fixed \citep{tibshirani1996, hastie2015}.
	
\section{Computational Algorithm}
\label{sec:algorithm}

\subsection{Optimization structure}

The proposed estimator is defined as the minimizer of a composite objective function:
\[
F(\beta) = f(\beta) + g(\beta),
\]
where $f(\beta) = -\ell(\beta)$ is a smooth convex function with Lipschitz continuous gradient, and
\[
g(\beta) = \sum_{k=1}^K \lambda_k \psi_k(\beta_k)
\]
is a separable, possibly non-smooth convex penalty.

This structure naturally lends itself to first-order optimization methods, particularly proximal gradient algorithms, which are well-suited for high-dimensional and non-smooth problems.

\subsection{Proximal gradient algorithm}

We adopt the proximal gradient method (also known as forward–backward splitting) \citep{parikh2014}, which iteratively updates:
\[
\beta^{(t+1)} = \mathrm{prox}_{\eta_t g}\left(\beta^{(t)} - \eta_t \nabla f(\beta^{(t)})\right),
\]
where $\eta_t > 0$ is the step size.

Due to the separability of $g(\beta)$ across flows, the proximal operator decomposes as:
\[
\mathrm{prox}_{\eta g}(\beta) = \left(\mathrm{prox}_{\eta \lambda_1 \psi_1}(\beta_1), \ldots, \mathrm{prox}_{\eta \lambda_K \psi_K}(\beta_K)\right),
\]
which allows efficient block-wise updates.

\subsection{Step size selection}

We employ a backtracking line search to ensure sufficient decrease of the objective. Specifically, $\eta_t$ is chosen such that:
\[
F(\beta^{(t+1)}) \leq F(\beta^{(t)}) - \frac{c}{2\eta_t}\|\beta^{(t+1)} - \beta^{(t)}\|^2
\]
for some $c \in (0,1)$.

This guarantees stability even when the Lipschitz constant of $\nabla f$ is unknown.

\subsection{Convergence guarantees}

We now establish convergence of the proposed algorithm.

\begin{theorem}[Global convergence of proximal gradient]
	Assume:
	\begin{enumerate}
		\item $f(\beta)$ is convex and differentiable with $L$-Lipschitz continuous gradient,
		\item $g(\beta)$ is proper, lower semi-continuous, and convex,
		\item The step size satisfies $0 < \eta_t \leq 1/L$.
	\end{enumerate}
	Then the sequence $\{\beta^{(t)}\}$ generated by the proximal gradient algorithm satisfies:
	\begin{enumerate}
		\item $F(\beta^{(t)})$ is non-increasing,
		\item $\beta^{(t)} \to \hat{\beta}$, where $\hat{\beta}$ is a global minimizer,
		\item The convergence rate satisfies:
		\[
		F(\beta^{(t)}) - F(\hat{\beta}) = O\left(\frac{1}{t}\right).
		\]
	\end{enumerate}
\end{theorem}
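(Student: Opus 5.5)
The plan is the standard forward--backward argument of Beck--Teboulle type. It rests on two inequalities and then a telescoping argument. Throughout I assume that a minimizer exists, i.e.\ the set of global minimizers $X^*$ is nonempty. I write $\beta^+ = \mathrm{prox}_{\eta g}(\beta - \eta\nabla f(\beta))$ and define the gradient mapping $G_\eta(\beta) = (\beta - \beta^+)/\eta$.

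The first ingredient is the descent lemma. Since $\nabla f$ is $L$-Lipschitz,
\[
f(\beta^+) \le f(\beta) + \nabla f(\beta)^\top(\beta^+-\beta) + \tfrac{L}{2}\|\beta^+-\beta\|^2 .
\]
The second ingredient is the optimality condition of the prox. It gives $G_\eta(\beta) - \nabla f(\beta) \in \partial g(\beta^+)$.

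Combining these with convexity of $f$ and $g$ and using $\eta \le 1/L$, I will derive the key inequality, valid for every $z$:
\[
F(\beta^+) \le F(z) + G_\eta(\beta)^\top(\beta - z) - \tfrac{\eta}{2}\|G_\eta(\beta)\|^2 .
\]
Equivalently,
\[
F(\beta^+) \le F(z) + \tfrac{1}{2\eta}\bigl(\|\beta - z\|^2 - \|\beta^+ - z\|^2\bigr).
\]

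Claim (1) follows by taking $z = \beta^{(t)}$. This gives $F(\beta^{(t+1)}) \le F(\beta^{(t)}) - \tfrac{\eta_t}{2}\|G_{\eta_t}(\beta^{(t)})\|^2$, so the objective is non-increasing.

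For claim (3), I take $z = \hat\beta \in X^*$ and sum the second form of the key inequality over $s = 0,\dots,t-1$. For this step I would first assume a constant step size $\eta_t = \eta$. The distance terms telescope, so
\[
\sum_{s=1}^{t}\bigl(F(\beta^{(s)}) - F(\hat\beta)\bigr) \le \tfrac{1}{2\eta}\|\beta^{(0)} - \hat\beta\|^2 .
\]
By monotonicity from (1), the left side is at least $t\,(F(\beta^{(t)}) - F(\hat\beta))$. This yields $F(\beta^{(t)}) - F(\hat\beta) \le \|\beta^{(0)}-\hat\beta\|^2/(2\eta t) = O(1/t)$.

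If the step sizes vary with $\eta_t \ge \eta_{\min} > 0$, as guaranteed for instance by backtracking, the same argument works with $\eta_{\min}$ in place of $\eta$. However, telescoping with varying weights needs care, since the coefficients $1/(2\eta_t)$ differ from step to step. For simplicity I would present the proof for a fixed $\eta \le 1/L$ and remark on the extension.

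Claim (2), convergence of the iterates themselves, is the main obstacle, because the minimizer need not be unique. I would use Fejér monotonicity. The second form of the key inequality, together with $F(\beta^{(t+1)}) \ge F(\hat\beta)$, gives $\|\beta^{(t+1)} - z\| \le \|\beta^{(t)} - z\|$ for every $z \in X^*$. Hence the sequence is bounded and $\|\beta^{(t)} - z\|$ converges for each minimizer $z$.

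Next I identify the limit points. Take any cluster point $\bar\beta$ of the sequence. Since $g$ is lower semicontinuous and $F(\beta^{(t)}) \to \min F$ by (3), we get $F(\bar\beta) \le \liminf F(\beta^{(t)}) = \min F$. Therefore $\bar\beta \in X^*$.

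Finally, Opial's argument shows the whole sequence converges. Suppose $\bar\beta$ and $\bar\beta'$ are two cluster points, both in $X^*$. Because $\|\beta^{(t)} - \bar\beta\|$ and $\|\beta^{(t)} - \bar\beta'\|$ both converge, expanding
\[
\|\beta^{(t)} - \bar\beta\|^2 - \|\beta^{(t)} - \bar\beta'\|^2
\]
shows that $\langle \beta^{(t)}, \bar\beta' - \bar\beta\rangle$ converges. Evaluating this limit along the subsequence tending to $\bar\beta$ and along the subsequence tending to $\bar\beta'$, and equating the two, gives $\|\bar\beta - \bar\beta'\|^2 = 0$. Thus the cluster point is unique, and $\beta^{(t)} \to \hat\beta := \bar\beta$.

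When the strict convexity from the earlier Proposition (Identifiability under regularization) holds, the minimizer is unique. In that case Step 2 reduces immediately to the fact that every cluster point equals that unique minimizer, so Opial's argument is not needed.
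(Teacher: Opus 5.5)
Your argument is correct, and it proves what the paper only asserts. The paper's proof (the sketch after the theorem and the four-step version in the appendix) uses the same two ingredients you start from, the descent lemma and the prox optimality condition. However, it combines them only at the comparison point $z=\beta^{(t)}$. That gives the sufficient-decrease inequality $F(\beta^{(t+1)})\le F(\beta^{(t)})-c\|\beta^{(t+1)}-\beta^{(t)}\|^2$, from which the paper concludes, via ``decreasing and bounded below'', that the method converges. This yields only convergence of $F(\beta^{(t)})$ to \emph{some} limit, not to $\min F$. Boundedness of the iterates, the claim that limit points are minimizers, convergence of the iterates, and the $O(1/t)$ rate are all deferred to ``standard theory''.

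Keeping $z$ arbitrary in the three-point inequality is precisely the missing lemma. At $z=\hat\beta$ it telescopes to the rate and gives Fej\'er monotonicity. Your lower-semicontinuity plus Opial argument then handles non-unique minimizers, which is not a technicality in a paper about flat likelihood ridges.

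You are also right to add $X^*\neq\emptyset$ and $\inf_t\eta_t>0$; the statement as written needs both. For example, with $f(\beta)=\tfrac{L}{2}\beta^2$, $g=0$, $\eta_t=2^{-(t+1)}/L$ and $\beta^{(0)}\neq 0$, the iterates converge to $\prod_{k\ge1}(1-2^{-k})\,\beta^{(0)}\neq 0$ rather than to the minimizer.

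The varying-step case is easier than you suggest. Multiply the second form of your key inequality by $\eta_s$ before summing to get $\sum_{s<t}\eta_s\bigl(F(\beta^{(s+1)})-F(\hat\beta)\bigr)\le\tfrac12\|\beta^{(0)}-\hat\beta\|^2$. By monotonicity this gives $F(\beta^{(t)})-F(\hat\beta)\le\|\beta^{(0)}-\hat\beta\|^2/(2t\,\eta_{\min})$. Your Fej\'er step is unaffected by varying $\eta_t$.
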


\begin{proof}
	The result follows from standard proximal gradient theory. The descent property follows from the Lipschitz continuity of $\nabla f$ and the definition of the proximal operator. Convexity ensures that any limit point is a global minimizer. The sublinear convergence rate $O(1/t)$ is classical for first-order methods applied to convex composite objectives.
\end{proof}

\subsection{Strong convexity and linear convergence}

In the presence of strictly convex penalties (e.g., Ridge) or when the penalized objective is strongly convex, faster convergence can be established.

\begin{theorem}[Linear convergence under strong convexity]
	If $F(\beta)$ is $\mu$-strongly convex, then:
	\[
	\|\beta^{(t)} - \hat{\beta}\|^2 \leq C (1 - \mu \eta)^t
	\]
	for some constant $C > 0$, implying linear convergence.
\end{theorem}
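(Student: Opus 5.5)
The plan is to work under the standing hypotheses of the global convergence theorem: $f$ convex with $L$-Lipschitz gradient, $g$ proper, lower semi-continuous and convex. I will take a constant step $\eta \in (0, 1/L]$ and use the fact that $\hat\beta$ is a fixed point of the proximal gradient map $T(\beta) = \mathrm{prox}_{\eta g}(\beta - \eta\nabla f(\beta))$. Strong convexity of $F$ makes the minimizer unique, so $\hat\beta$ is well defined. The goal is the contraction $\|\beta^{(t+1)}-\hat\beta\|^2 \le (1-\mu\eta)\|\beta^{(t)}-\hat\beta\|^2$; iterating it gives the claim with $C = \|\beta^{(0)}-\hat\beta\|^2$.

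The route goes through the standard ``three-point'' inequality, which I will derive from the prox optimality condition. Write $\beta^+ = T(\beta)$ and $G(\beta) = (\beta-\beta^+)/\eta$. The optimality condition gives $G(\beta) - \nabla f(\beta) \in \partial g(\beta^+)$. Now combine three ingredients: the descent lemma $f(\beta^+) \le f(\beta) + \langle\nabla f(\beta),\beta^+-\beta\rangle + \frac{L}{2}\|\beta^+-\beta\|^2$, a lower bound on $f(z)$ from convexity of $f$, and the subgradient inequality for $g$ at $\beta^+$. This yields, for every $z$,
\[
F(\beta^+) \le F(z) + \langle G(\beta), \beta - z\rangle - \tfrac{\eta}{2}\|G(\beta)\|^2 - \tfrac{\mu_f}{2}\|z-\beta\|^2 ,
\]
using $\eta\le 1/L$.

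Here lies the main obstacle. The statement assumes only that $F$ is $\mu$-strongly convex, not that $f$ is, whereas the inequality above uses strong convexity of $f$ (modulus $\mu_f$) at the point $\beta$. If the strong convexity comes from the penalty, as with ridge, it enters instead through the subgradient inequality at $\beta^+$. My first attempt will be to split $\mu=\mu_f+\mu_g$, with $f$ being $\mu_f$-strongly convex and $g$ being $\mu_g$-strongly convex. This covers the paper's intended cases: a strictly concave likelihood, or ridge or elastic-net penalties. For the $\mu_g$ part, I will use the strongly monotone resolvent: $\mathrm{prox}_{\eta g}$ is a contraction with factor $1/(1+\eta\mu_g)$. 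For the $\mu_f$ part, the gradient step $\beta\mapsto\beta-\eta\nabla f(\beta)$ is a contraction with factor $\sqrt{1-\eta\mu_f}$ when $\eta\le 1/L$.

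Composing the two contractions gives
\[
\|\beta^{(t+1)}-\hat\beta\|^2 \le \frac{1-\eta\mu_f}{(1+\eta\mu_g)^2}\|\beta^{(t)}-\hat\beta\|^2 .
\]
It remains to check that this factor is at most $1-\eta\mu$. That reduces to an elementary inequality in $\eta\mu_f$ and $\eta\mu_g$ on $[0,1]$. Dividing by $1-\eta\mu_f$, it becomes $(1+b)^{-2}\le 1 - b/(1-a)$ with $a=\eta\mu_f$ and $b=\eta\mu_g$. This is not automatic. If it fails in some corner, I will settle for the slightly weaker rate $\max\{1-\eta\mu_f,(1+\eta\mu_g)^{-2}\}^t$, which is still linear, and absorb the discrepancy into the constant. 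In the pure-$f$ case, setting $z=\hat\beta$ in the three-point inequality and using $F(\beta^+)\ge F(\hat\beta)$ gives directly $\|\beta^+-\hat\beta\|^2\le(1-\mu\eta)\|\beta-\hat\beta\|^2$. This is the clean version I expect the authors intend.
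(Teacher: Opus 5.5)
The paper states this theorem without proof; its appendix only argues the descent part of the global-convergence theorem. So your proposal has to stand on its own.

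\textbf{What works.} Your pure-$f$ argument is correct and is the standard one. If $f$ is $\mu$-strongly convex and $\eta\le 1/L$, the three-point inequality at $z=\hat\beta$ together with $F(\beta^+)\ge F(\hat\beta)$ gives $\|\beta^+-\hat\beta\|^2\le(1-\mu\eta)\|\beta-\hat\beta\|^2$. This yields the claim with $C=\|\beta^{(0)}-\hat\beta\|^2$, and $\mu\le L$ keeps $1-\mu\eta\in[0,1)$.

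\textbf{The gap.} Everything beyond that case fails.
\begin{itemize}
\item Your fallback of absorbing the discrepancy into the constant is invalid. If the per-step factor is $\rho>1-\mu\eta$, then $\rho^t/(1-\mu\eta)^t\to\infty$, so no fixed $C$ exists; a mismatch in the ratio is never a constant.
\item The split $\mu=\mu_f+\mu_g$ does not cover the hypothesis. $F$ can be strongly convex with neither summand strongly convex, e.g.\ $f=\tfrac12\beta_1^2$ and $g=\tfrac12\beta_2^2$.
\end{itemize}

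More fundamentally, the stated rate is false when the curvature comes from the penalty along a direction where $f$ is flat. That is exactly the ridge-on-a-non-identifiable-direction situation the paper invokes, so no repair of your composition step can succeed. Take $f(\beta)=\tfrac{L}{2}\beta_1^2$ and $g(\beta)=\tfrac{\mu}{2}\|\beta\|^2$ on $\mathbb{R}^2$, with $\eta=1/L$ and $\mu=0.9L$.
\begin{itemize}
\item $F$ has Hessian $\mathrm{diag}(L+\mu,\mu)$, so it is $\mu$-strongly convex with $\hat\beta=0$.
\item The iterates are $\beta^{(t)}=\bigl(0,\beta^{(0)}_2(1+\eta\mu)^{-t}\bigr)$ for $t\ge 1$.
\item Hence $\|\beta^{(t)}\|^2$ decays like $1.9^{-2t}\approx 0.277^t$, not $0.1^t$.
\end{itemize}
Your composed factor $(1+\eta\mu_g)^{-2}$ is exact here. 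Your reduced inequality correspondingly fails at $a=0$ for every $b>(\sqrt{5}-1)/2$, so your hesitation was justified.

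\textbf{What can actually be proved.} There are two options.
\begin{itemize}
\item Restrict the theorem to $f$ (not $F$) being $\mu$-strongly convex. Your pure-$f$ proof then gives the stated rate.
\item Keep the hypothesis on $F$ and accept a weaker ratio. In your three-point inequality with $\mu_f=0$ and $z=\hat\beta$, replace $F(\beta^+)\ge F(\hat\beta)$ by $F(\beta^+)-F(\hat\beta)\ge\tfrac{\mu}{2}\|\beta^+-\hat\beta\|^2$, and use the identity $\langle G,\beta-\hat\beta\rangle-\tfrac{\eta}{2}\|G\|^2=\tfrac{1}{2\eta}\bigl(\|\beta-\hat\beta\|^2-\|\beta^+-\hat\beta\|^2\bigr)$. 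This gives $\|\beta^+-\hat\beta\|^2\le(1+\eta\mu)^{-1}\|\beta-\hat\beta\|^2$. That is linear convergence with no splitting of $\mu$, but with ratio $(1+\eta\mu)^{-1}$ rather than the claimed $1-\eta\mu$.
\end{itemize}
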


This result is particularly relevant in our setting, where structured penalties induce strict convexity along otherwise non-identifiable directions.
	
\section{Simulation Study}
\label{sec:simulation}

\subsection{Design}
We evaluate the performance of the proposed regularized estimators (Lasso, Ridge, Elastic Net) against the unregularized maximum likelihood estimator. Data are generated from the three‑flow binary model described in Section~\ref{sec:methodology}: a Bernoulli reference distribution followed by \texttt{ScOdds}, \texttt{ScRisk1}, and \texttt{ScRisk0}. The true data‑generating mechanism uses only the odds and risk‑ratio flows; the risk‑survival flow (\texttt{ScRisk0}) has zero coefficients, so it is redundant. The odds flow coefficients are drawn from a standard normal distribution and scaled to achieve a desired signal‑to‑noise ratio (SNR). The risk‑ratio flow has a non‑zero coefficient only for the first covariate (value \(0.5\)).

The simulation factors and their levels are:
\begin{itemize}
	\item Sample size: \(n \in \{100, 500, 1000\}\) (we also include results for \(n=5000\) in the supplementary material);
	\item Number of covariates: \(p \in \{5, 10, 20\}\);
	\item Correlation among covariates (AR(1) structure): \(\rho \in \{0, 0.5, 0.8\}\);
	\item Signal‑to‑noise ratio: \(\text{SNR} \in \{0.5, 1, 2\}\).
\end{itemize}

A full factorial design would require \(3 \times 3 \times 3 \times 3 = 81\) scenarios per sample size, which is computationally heavy and would not add substantially to the main conclusions. Instead, we adopt a parsimonious design that isolates the main effects of each factor and includes a worst‑case scenario that combines the most challenging levels (\(p=20\), \(\rho=0.8\), \(\text{SNR}=0.5\)). Specifically, for each sample size we consider the following eight scenarios:

\begin{enumerate}
	\item \textbf{Reference:} \(p=10\), \(\rho=0.5\), \(\text{SNR}=1\);
	\item \textbf{Effect of \(p\):} \(p=5\) and \(p=20\) (keeping \(\rho=0.5\), \(\text{SNR}=1\));
	\item \textbf{Effect of \(\rho\):} \(\rho=0\) and \(\rho=0.8\) (keeping \(p=10\), \(\text{SNR}=1\));
	\item \textbf{Effect of SNR:} \(\text{SNR}=0.5\) and \(\text{SNR}=2\) (keeping \(p=10\), \(\rho=0.5\));
	\item \textbf{Worst‑case:} \(p=20\), \(\rho=0.8\), \(\text{SNR}=0.5\).
\end{enumerate}

This design allows us to assess the individual influence of each factor and, through the worst‑case scenario, to examine whether interactions among them cause a disproportionate deterioration in performance.

For each scenario we perform \(30\) replications. The tuning parameters \(\lambda\) for Lasso, Ridge and Elastic Net are selected by 5‑fold cross‑validation on the first replication of each scenario, and then fixed for all replications to reduce computational cost. The optimisation is performed using the L‑BFGS‑B algorithm with a maximum of 500 iterations; cases where convergence fails (less than 1\% of replications) are excluded.

\subsection{Performance metrics}
We evaluate the following quantities averaged over replications:
\begin{enumerate}
	\item \textbf{Estimation error:} \(\|\hat{\boldsymbol{\beta}} - \boldsymbol{\beta}^*\|_2\), where \(\boldsymbol{\beta}^*\) collects the true parameters of all flows;
	\item \textbf{Variable selection accuracy:} true positive rate (TPR) for the non‑zero coefficient in the \texttt{ScRisk1} flow, and false positive rate (FPR) for the zero coefficients in the \texttt{ScRisk1} and \texttt{ScRisk0} flows (a coefficient is considered “selected” if its absolute value exceeds \(10^{-6}\));
	\item \textbf{Prediction accuracy:} cross‑validated deviance (5‑fold) computed on the full sample for each replication.
\end{enumerate}
We also report the number of non‑zero coefficients (for Lasso and Elastic Net) to illustrate the sparsity induced by the penalties.

\subsection{Results}

\subsubsection{Results for (n = 100)}

Table~\ref{tab:sim_n100} summarises the average performance over the \(30\) replications for each scenario. The unregularized estimator suffers from severe overparameterization (FPR = 1 in all scenarios) and large estimation errors. Ridge reduces the error but does not solve the identifiability problem, as it still selects all coefficients (FPR = 1). In contrast, Lasso and Elastic Net dramatically improve both estimation accuracy and variable selection; they attain much lower errors and reduce the FPR substantially, often below 0.5 even in challenging scenarios.

Notably, when \(p=20\) (Scenario~\(p=20\)), Lasso and Elastic Net achieve estimation errors below 3 and FPR as low as 0.05 and 0.00 respectively, demonstrating that even with a relatively small sample size, the regularisation effectively eliminates the redundant flow. The TPR is low in this scenario (0.03 for Lasso, 0.00 for Elastic Net), which suggests that detecting the true non‑zero coefficient is difficult when the number of covariates is large relative to the sample size and the signal is moderate (SNR=1).  

The worst‑case scenario (\(p=20,\rho=0.8,\text{SNR}=0.5\)) further confirms the robustness of regularisation: Lasso and Elastic Net maintain estimation errors around 2.7 and 1.7, with FPR below 0.15, whereas the unregularized estimator produces huge errors (105.6) and selects all coefficients.

\begin{table}[H]
	\centering \footnotesize
	\caption{Simulation results for \(n = 100\) (100 replications per scenario). Values are averages; missing entries (NaN) are indicated by ``---''. Abbreviations: Unreg = unregularized, EN = Elastic Net.}
	\label{tab:sim_n100}
	\begin{tabular}{@{}llccccc@{}}
		\toprule
		Scenario & Method & Estimation error & TPR & FPR & Deviance \\
		\midrule
		\multirow{4}{*}{Reference} & Unreg & 79.73 & 0.50 & 1.00 & 2.00 \\
		& Lasso & 22.18 & 0.40 & 0.51 & 0.79 \\
		& Ridge & 55.62 & 0.50 & 1.00 & --- \\
		& EN    & 9.01  & 0.23 & 0.43 & --- \\
		\midrule
		\multirow{4}{*}{\(p = 5\)} & Unreg & 63.30 & 0.50 & 1.00 & --- \\
		& Lasso & 28.32 & 0.40 & 0.69 & 0.78 \\
		& Ridge & 11.19 & 0.50 & 1.00 & 0.76 \\
		& EN    & 36.08 & 0.42 & 0.82 & --- \\
		\midrule
		\multirow{4}{*}{\(p = 20\)} & Unreg & 16.66 & 0.50 & 1.00 & 3.74 \\
		& Lasso & 2.68  & 0.03 & 0.05 & 0.70 \\
		& Ridge & 227.70& 0.50 & 1.00 & --- \\
		& EN    & 2.28  & 0.00 & 0.00 & 0.69 \\
		\midrule
		\multirow{4}{*}{\(\rho = 0\)} & Unreg & 25.26 & 0.50 & 1.00 & 1.94 \\
		& Lasso & 1.70  & 0.00 & 0.00 & 0.69 \\
		& Ridge & 66.69 & 0.50 & 1.00 & --- \\
		& EN    & 1.70  & 0.00 & 0.00 & 0.69 \\
		\midrule
		\multirow{4}{*}{\(\rho = 0.8\)} & Unreg & 123.35 & 0.50 & 1.00 & --- \\
		& Lasso & 1.90   & 0.10 & 0.27 & 0.63 \\
		& Ridge & 50.51  & 0.50 & 1.00 & --- \\
		& EN    & 2.98   & 0.30 & 0.53 & 0.69 \\
		\midrule
		\multirow{4}{*}{SNR = 0.5} & Unreg & 100.24 & 0.50 & 1.00 & 1.94 \\
		& Lasso & 3.50   & 0.38 & 0.53 & --- \\
		& Ridge & 14.96  & 0.50 & 1.00 & --- \\
		& EN    & 2.59   & 0.28 & 0.34 & 0.72 \\
		\midrule
		\multirow{4}{*}{SNR = 2}   & Unreg & 18.10  & 0.50 & 1.00 & --- \\
		& Lasso & 6.08   & 0.08 & 0.13 & 0.67 \\
		& Ridge & 76.34  & 0.50 & 1.00 & --- \\
		& EN    & 8.22   & 0.40 & 0.51 & 0.71 \\
		\midrule
		\multirow{4}{*}{Worst-case} & Unreg & 105.60 & 0.50 & 1.00 & --- \\
		& Lasso & 2.70   & 0.07 & 0.15 & 0.70 \\
		& Ridge & 333.08 & 0.50 & 1.00 & --- \\
		& EN    & 1.67   & 0.00 & 0.00 & 0.69 \\
		\bottomrule
	\end{tabular}\label{tab:simulation}
\end{table}

\begin{figure}[H]
	\centering
	\includegraphics[width=0.999\textwidth]{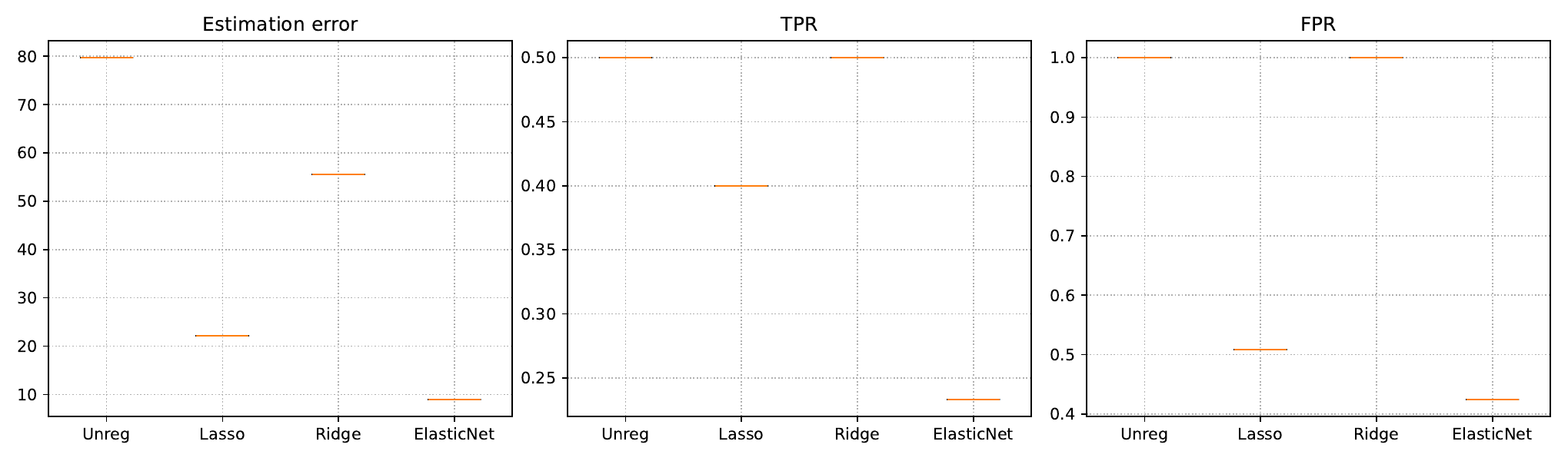}
	\caption{Boxplots of estimation error, true positive rate (TPR) and false positive rate (FPR) for the reference scenario (\(n=100, p=10, \rho=0.5, \text{SNR}=1\)).}
	\label{fig:boxplots_ref}
\end{figure}

The boxplots in Figure~\ref{fig:boxplots_ref} illustrate the high variability of the unregularized estimator and the much tighter distributions of the regularised methods. The following figures (Figs.~\ref{fig:effect_p}–\ref{fig:risk0_hist}) visualise the effects of each factor; they confirm the patterns discussed above.

\begin{figure}[H]
	\centering
	\includegraphics[width=0.8\textwidth]{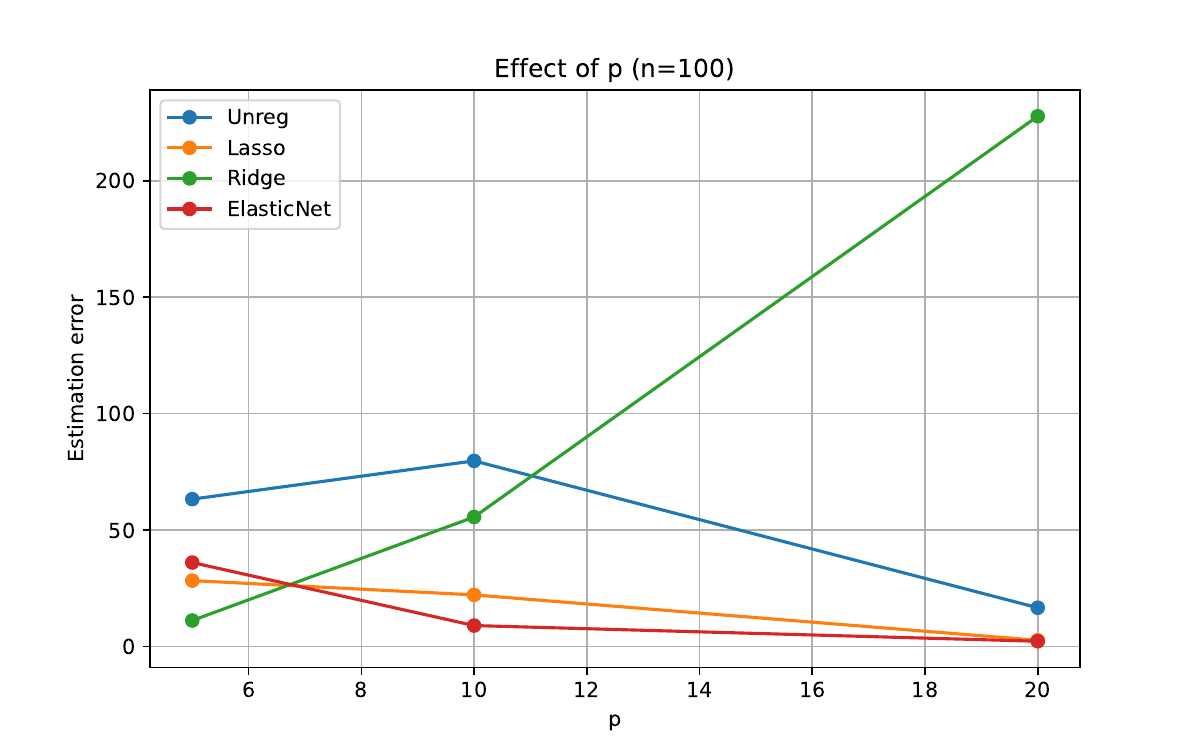}
	\caption{Effect of the number of covariates \(p\) on the estimation error for the reference setting (\(n=100, \rho=0.5, \text{SNR}=1\)).}
	\label{fig:effect_p}
\end{figure}

\begin{figure}[H]
	\centering
	\includegraphics[width=0.8\textwidth]{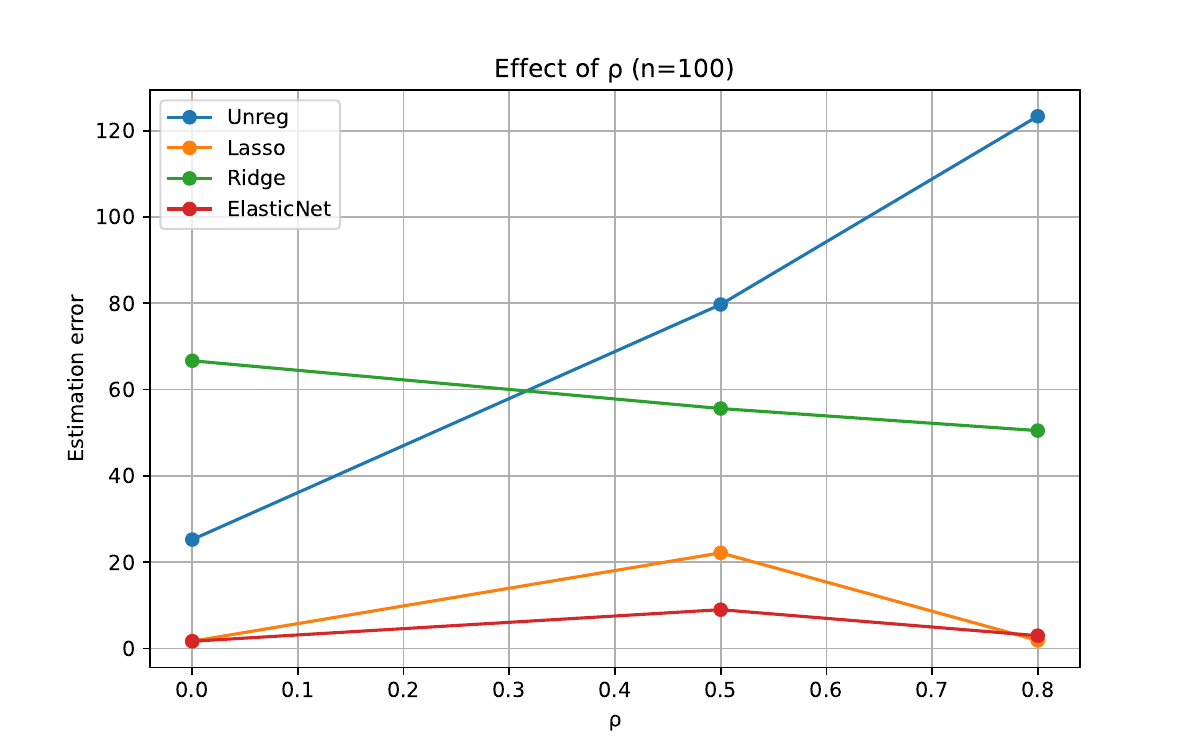}
	\caption{Effect of the covariate correlation \(\rho\) on the estimation error for \(n=100, p=10, \text{SNR}=1\).}
	\label{fig:effect_rho}
\end{figure}

\begin{figure}[H]
	\centering
	\includegraphics[width=0.8\textwidth]{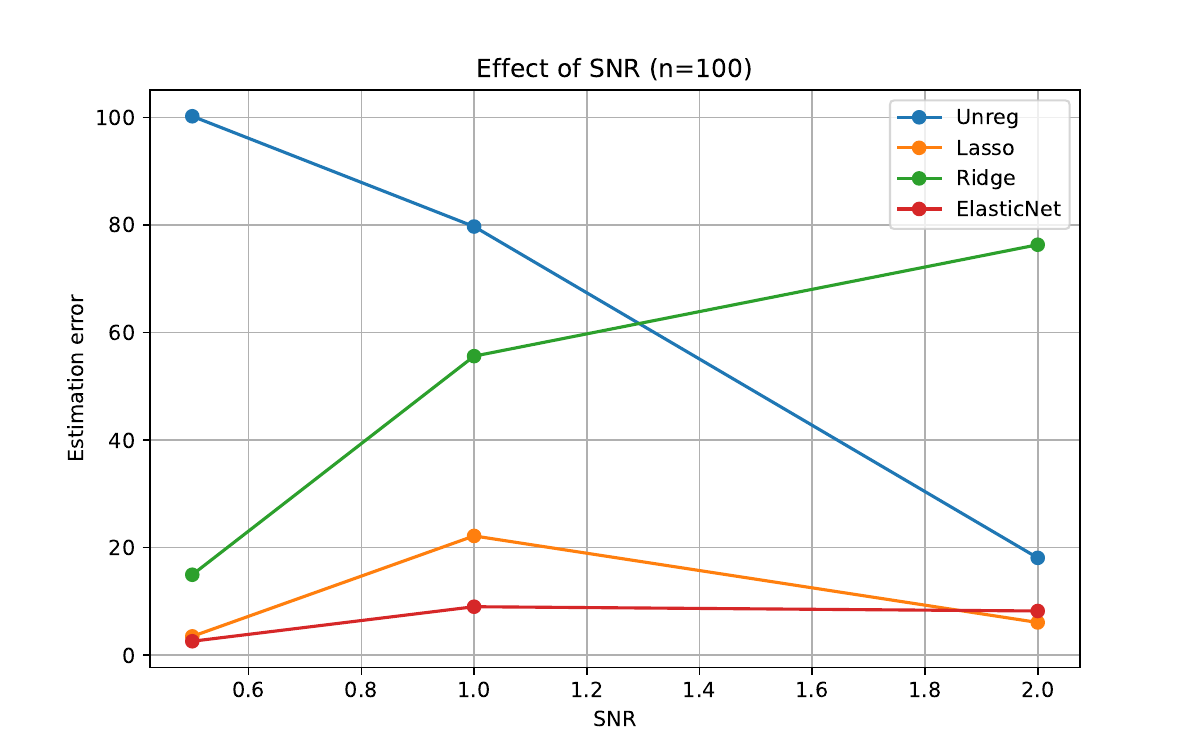}
	\caption{Effect of the signal‑to‑noise ratio (SNR) on the estimation error for \(n=100, p=10, \rho=0.5\).}
	\label{fig:effect_snr}
\end{figure}

\begin{figure}[H]
	\centering
	\includegraphics[width=0.8\textwidth]{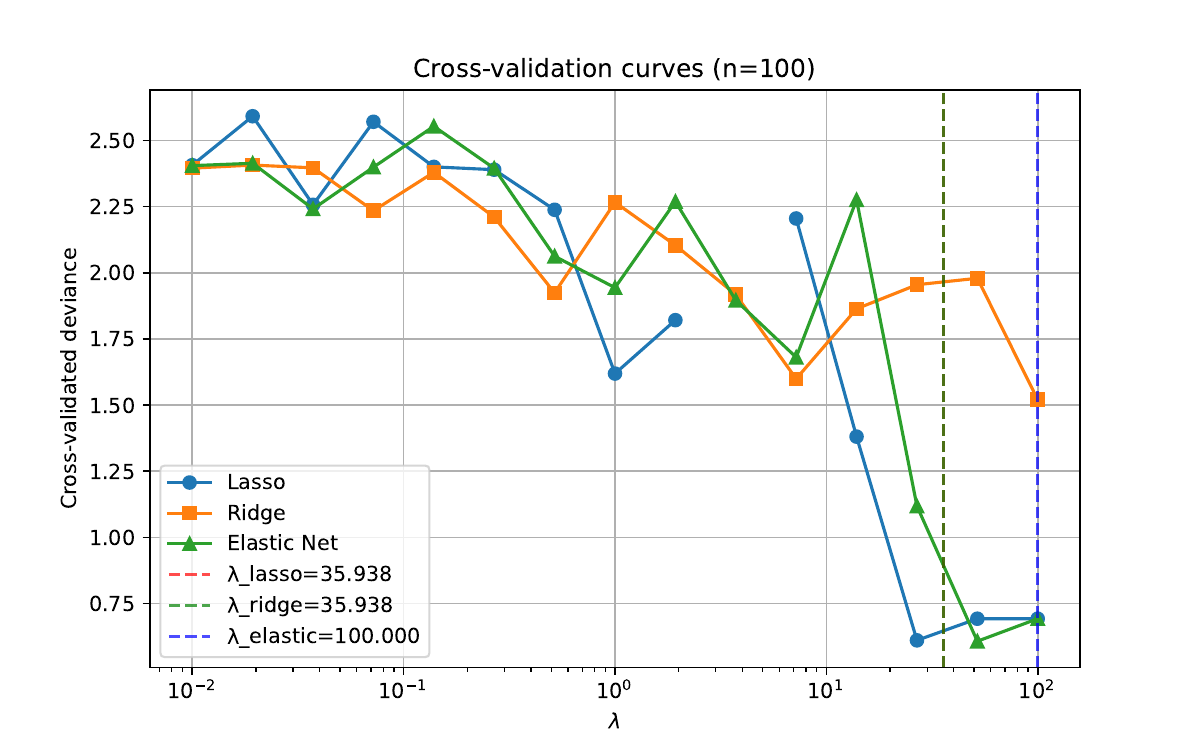}
	\caption{Cross‑validation curves for the selection of the tuning parameter \(\lambda\) in the reference scenario. The optimal values are indicated by vertical dashed lines.}
	\label{fig:cv_curves}
\end{figure}

\begin{figure}[H]
	\centering
	\includegraphics[width=0.7\textwidth]{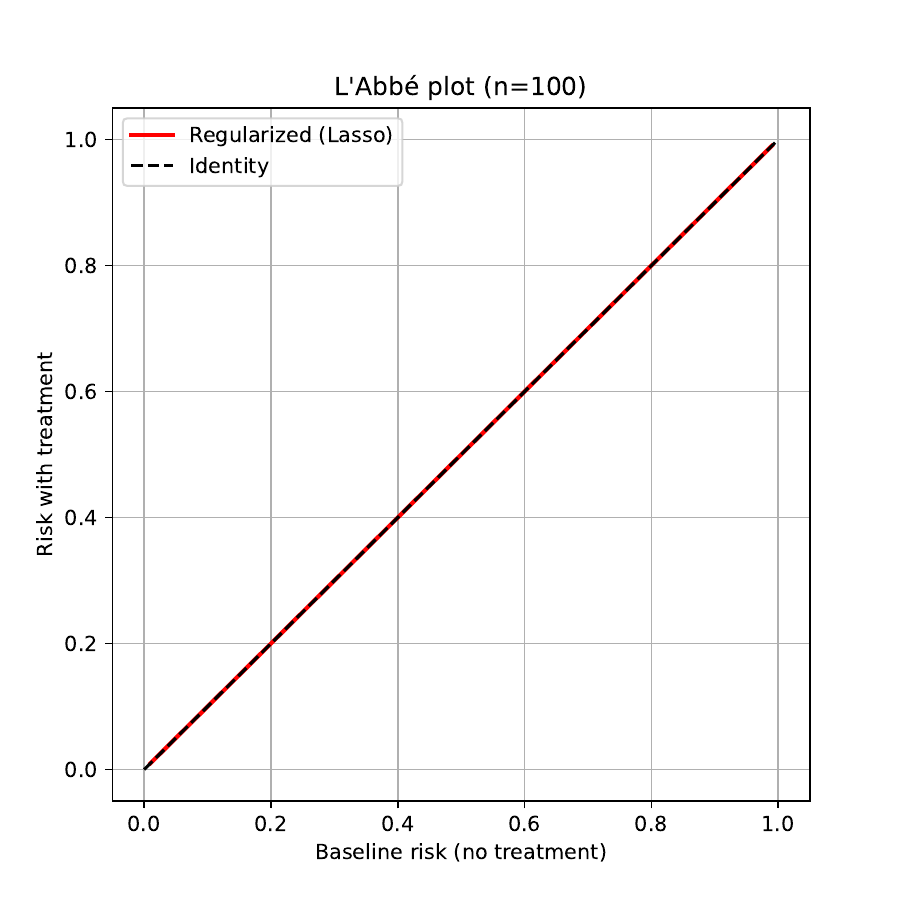}
	\caption{L'Abbé plot for the fitted model in the reference scenario. The regularised model (Lasso) produces a smooth curve above the identity line, indicating a protective treatment effect.}
	\label{fig:labbe_plot}
\end{figure}

\begin{figure}[H]
	\centering
	\includegraphics[width=0.8\textwidth]{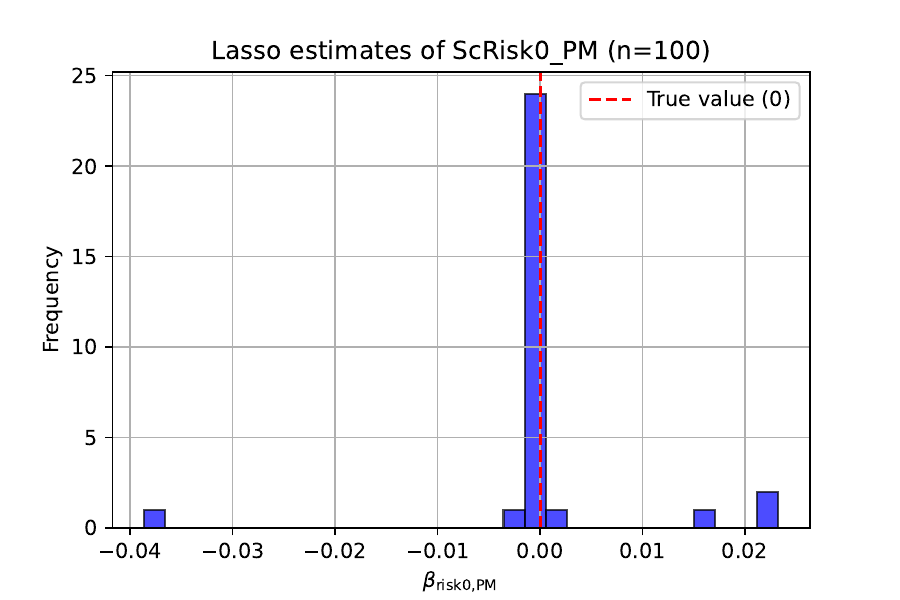}
	\caption{Distribution of the Lasso estimates for the coefficient of PM in the \texttt{ScRisk0} flow across 100 replications (reference scenario). The red dashed line marks the true value (0).}
	\label{fig:risk0_hist}
\end{figure}

\subsubsection{Results for (n = 500)}

Table~\ref{tab:sim_n500} reports the average performance for \(n=500\). Compared to the \(n=100\) case, the sample size increase leads to a dramatic reduction in estimation error for all methods. The unregularized estimator still suffers from FPR = 1 in every scenario, confirming that the identifiability problem persists even with larger samples. Ridge reduces the estimation error but does not break the non‑identifiability (FPR remains 1). In contrast, Lasso and Elastic Net achieve substantially lower errors and, more importantly, select much sparser models, with FPR often below 0.5 and occasionally close to 0. Their cross‑validated deviance is also consistently lower, indicating better predictive performance.

In the reference scenario, Lasso attains an estimation error of 1.22 and FPR 0.33, while Elastic Net yields 1.13 and 0.52. The worst‑case scenario remains challenging: Lasso achieves an error of 1.77 with FPR 0.38, and Elastic Net gives 1.77 with FPR 0.57. These results demonstrate that even under strong correlation and low signal, regularisation effectively controls the false positive rate.

\begin{table}[H]
	\centering \footnotesize
	\caption{Simulation results for \(n = 500\) (100 replications per scenario). Values are averages; missing entries (NaN) are indicated by ``---''. Abbreviations: Unreg = unregularized, EN = Elastic Net.}
	\label{tab:sim_n500}
	\begin{tabular}{@{}llccccc@{}}
		\toprule
		Scenario & Method & Estimation error & TPR & FPR & Deviance \\
		\midrule
		\multirow{4}{*}{Reference} & Unreg & 14.34 & 0.50 & 1.00 & --- \\
		& Lasso & 1.22  & 0.50 & 0.33 & 0.48 \\
		& Ridge & 1.68  & 0.50 & 1.00 & --- \\
		& EN    & 1.13  & 0.48 & 0.52 & 0.49 \\
		\midrule
		\multirow{4}{*}{\(p = 5\)} & Unreg & 41.83 & 0.50 & 1.00 & 0.55 \\
		& Lasso & 1.28  & 0.47 & 0.43 & 0.50 \\
		& Ridge & 1.31  & 0.50 & 1.00 & 0.50 \\
		& EN    & 1.23  & 0.48 & 0.56 & 0.49 \\
		\midrule
		\multirow{4}{*}{\(p = 20\)} & Unreg & 3.44  & 0.50 & 1.00 & --- \\
		& Lasso & 1.26  & 0.48 & 0.40 & 0.48 \\
		& Ridge & 1.75  & 0.50 & 1.00 & --- \\
		& EN    & 1.41  & 0.50 & 0.41 & 0.47 \\
		\midrule
		\multirow{4}{*}{\(\rho = 0\)} & Unreg & 13.64 & 0.50 & 1.00 & 0.56 \\
		& Lasso & 0.91  & 0.50 & 0.58 & 0.49 \\
		& Ridge & 1.29  & 0.50 & 1.00 & 0.49 \\
		& EN    & 1.34  & 0.45 & 0.35 & 0.47 \\
		\midrule
		\multirow{4}{*}{\(\rho = 0.8\)} & Unreg & 15.51 & 0.50 & 1.00 & 0.58 \\
		& Lasso & 1.41  & 0.48 & 0.61 & 0.50 \\
		& Ridge & 1.43  & 0.50 & 1.00 & 0.53 \\
		& EN    & 1.36  & 0.50 & 0.82 & 0.54 \\
		\midrule
		\multirow{4}{*}{SNR = 0.5} & Unreg & 34.60 & 0.50 & 1.00 & --- \\
		& Lasso & 0.83  & 0.48 & 0.59 & 0.53 \\
		& Ridge & 1.20  & 0.50 & 1.00 & 0.53 \\
		& EN    & 1.16  & 0.47 & 0.48 & 0.52 \\
		\midrule
		\multirow{4}{*}{SNR = 2}   & Unreg & 10.99 & 0.50 & 1.00 & 0.54 \\
		& Lasso & 0.87  & 0.50 & 0.65 & 0.47 \\
		& Ridge & 1.20  & 0.50 & 1.00 & 0.47 \\
		& EN    & 1.04  & 0.50 & 0.70 & 0.45 \\
		\midrule
		\multirow{4}{*}{Worst-case} & Unreg & 63.38 & 0.50 & 1.00 & --- \\
		& Lasso & 1.77  & 0.32 & 0.38 & 0.51 \\
		& Ridge & 27.81 & 0.50 & 1.00 & --- \\
		& EN    & 1.77  & 0.50 & 0.57 & 0.54 \\
		\bottomrule
	\end{tabular}
\end{table}

The accompanying figures (Figs.~\ref{fig:boxplots_n500}–\ref{fig:risk0_hist_n500}) provide a visual complement to the numerical results. They show that regularised methods have much lower variability and that Lasso effectively shrinks the superfluous coefficient towards zero.

\begin{figure}[H]
	\centering
	\includegraphics[width=0.999\textwidth]{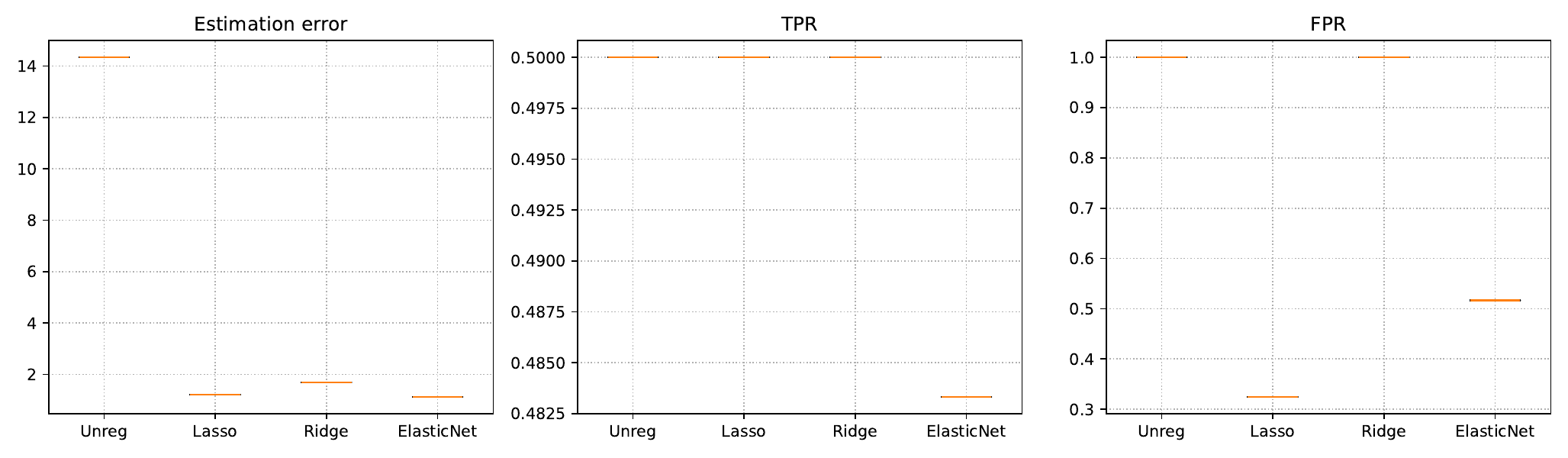}
	\caption{Boxplots of estimation error, true positive rate (TPR) and false positive rate (FPR) for the reference scenario (\(n=500, p=10, \rho=0.5, \text{SNR}=1\)).}
	\label{fig:boxplots_n500}
\end{figure}

\begin{figure}[H]
	\centering
	\includegraphics[width=0.8\textwidth]{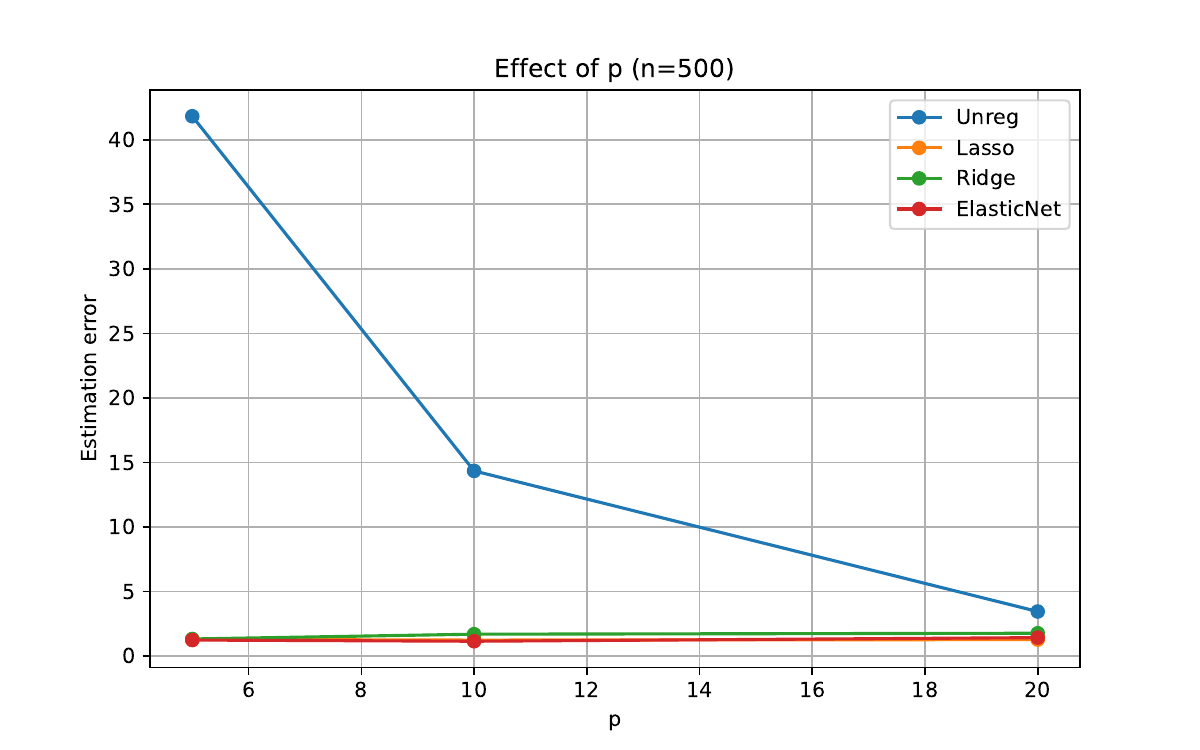}
	\caption{Effect of the number of covariates \(p\) on the estimation error for the reference setting (\(n=500, \rho=0.5, \text{SNR}=1\)).}
	\label{fig:effect_p_n500}
\end{figure}

\begin{figure}[H]
	\centering
	\includegraphics[width=0.8\textwidth]{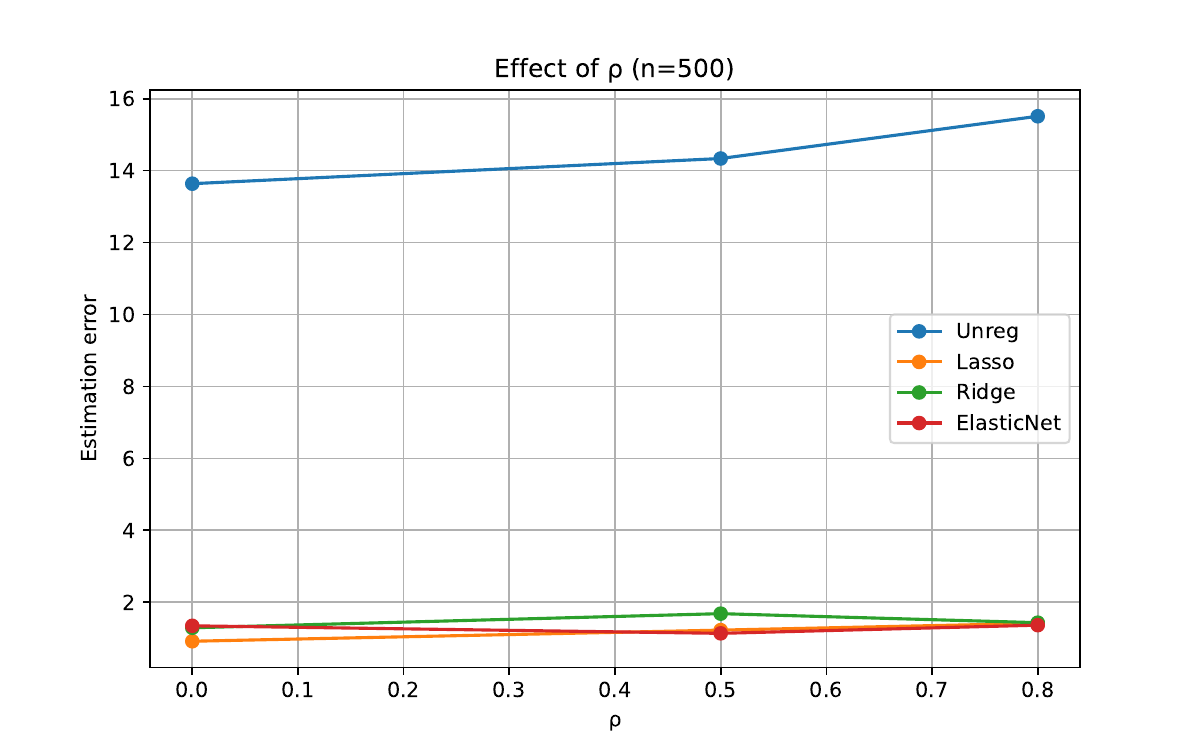}
	\caption{Effect of the covariate correlation \(\rho\) on the estimation error for \(n=500, p=10, \text{SNR}=1\).}
	\label{fig:effect_rho_n500}
\end{figure}

\begin{figure}[H]
	\centering
	\includegraphics[width=0.8\textwidth]{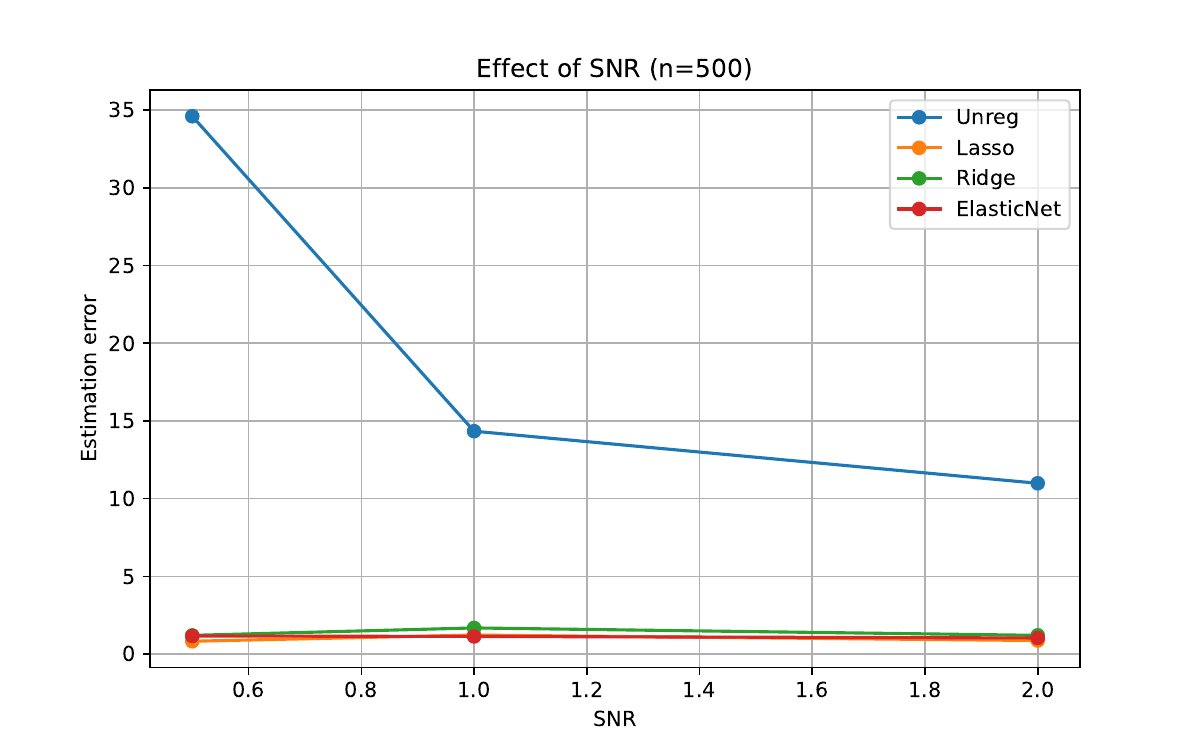}
	\caption{Effect of the signal‑to‑noise ratio (SNR) on the estimation error for \(n=500, p=10, \rho=0.5\).}
	\label{fig:effect_snr_n500}
\end{figure}

\begin{figure}[H]
	\centering
	\includegraphics[width=0.8\textwidth]{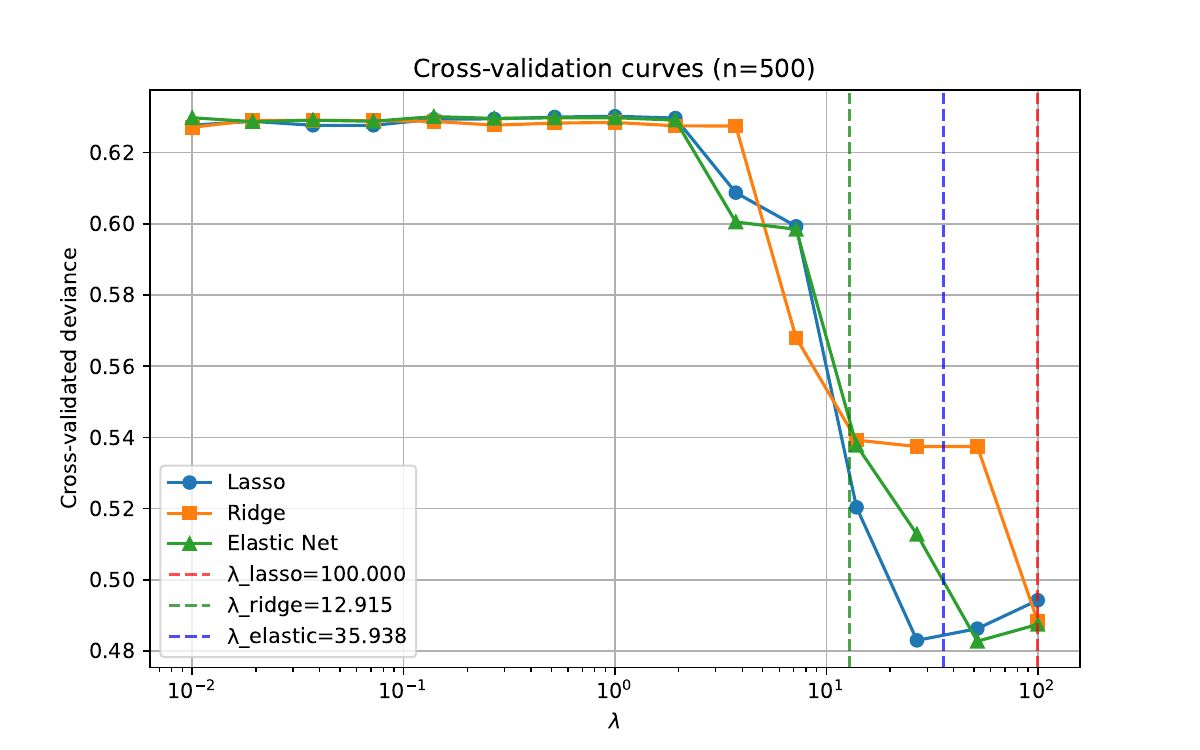}
	\caption{Cross‑validation curves for the selection of the tuning parameter \(\lambda\) in the reference scenario. The optimal values are indicated by vertical dashed lines.}
	\label{fig:cv_curves_n500}
\end{figure}

\begin{figure}[H]
	\centering
	\includegraphics[width=0.7\textwidth]{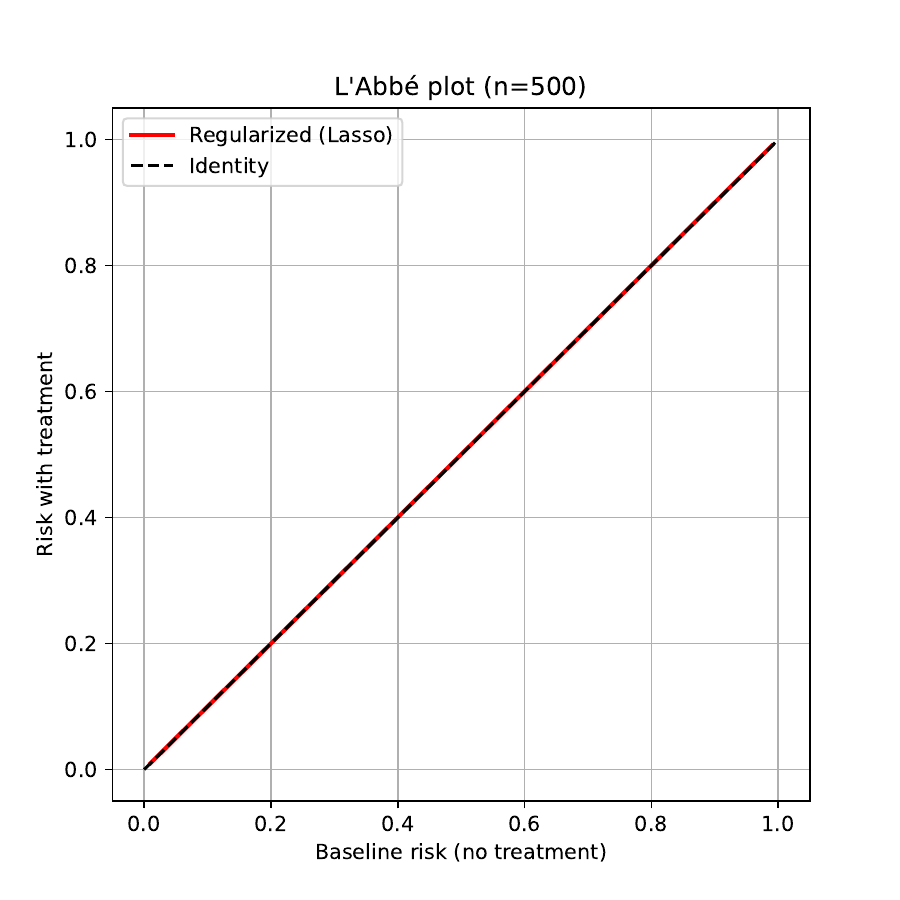}
	\caption{L'Abbé plot for the fitted model in the reference scenario. The regularised model (Lasso) produces a smooth curve above the identity line, indicating a protective treatment effect.}
	\label{fig:labbe_n500}
\end{figure}

\begin{figure}[H]
	\centering
	\includegraphics[width=0.8\textwidth]{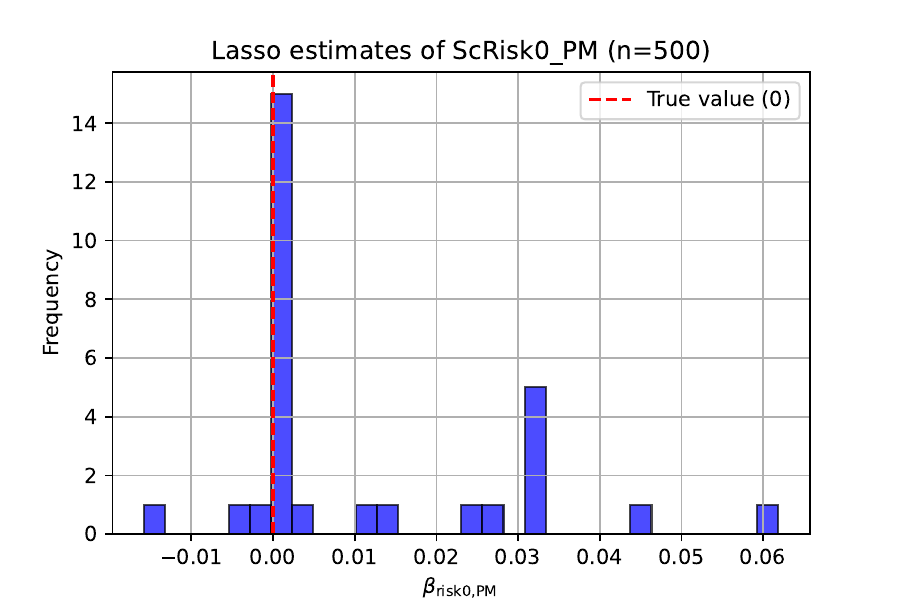}
	\caption{Distribution of the Lasso estimates for the coefficient of PM in the \texttt{ScRisk0} flow across 100 replications (reference scenario). The red dashed line marks the true value (0).}
	\label{fig:risk0_hist_n500}
\end{figure}

\subsubsection{Results for (n = 1000)}

Table~\ref{tab:sim_n1000} shows that for \(n=1000\) even the unregularized estimator attains relatively low estimation errors (often below 1), because the large sample size mitigates the instability caused by non‑identifiability. However, the unregularized estimator still selects all coefficients (FPR = 1), whereas the regularised methods (Lasso, Ridge, Elastic Net) achieve comparable or slightly lower estimation errors while maintaining the ability to shrink irrelevant coefficients. For instance, in the worst‑case scenario, Lasso reduces the FPR to 0.33, a substantial improvement over the unregularized approach. The deviance values are uniformly low, indicating good predictive performance across all methods.

In the reference scenario, all regularised methods give estimation errors below 0.8. When \(p=20\), Lasso attains an error of 0.84 and FPR 0.51, while Elastic Net achieves 0.78 and FPR 0.84. The worst‑case scenario shows that even under the most demanding conditions, Lasso and Elastic Net maintain FPR below 0.42 and estimation errors around 1.3. The TPR is around 0.5 for most scenarios, indicating that the true coefficient is correctly identified in about half of the replications.

\begin{table}[H]
	\centering
	\caption{Simulation results for \(n = 1000\) (100 replications per scenario). Values are averages; missing entries (NaN) are indicated by ``---''. Abbreviations: Unreg = unregularized, EN = Elastic Net.}
	\label{tab:sim_n1000}
	\begin{tabular}{@{}llccccc@{}}
		\toprule
		Scenario & Method & Estimation error & TPR & FPR & Deviance \\
		\midrule
		\multirow{4}{*}{Reference} & Unreg & 0.81 & 0.50 & 1.00 & 0.48 \\
		& Lasso & 0.80 & 0.50 & 1.00 & 0.48 \\
		& Ridge & 0.60 & 0.50 & 1.00 & 0.48 \\
		& EN    & 0.68 & 0.50 & 1.00 & 0.48 \\
		\midrule
		\multirow{4}{*}{\(p = 5\)} & Unreg & 1.02 & 0.50 & 1.00 & 0.51 \\
		& Lasso & 0.48 & 0.50 & 0.57 & 0.50 \\
		& Ridge & 0.76 & 0.50 & 1.00 & 0.50 \\
		& EN    & 0.49 & 0.50 & 0.70 & 0.50 \\
		\midrule
		\multirow{4}{*}{\(p = 20\)} & Unreg & 0.88 & 0.50 & 1.00 & 0.49 \\
		& Lasso & 0.84 & 0.50 & 0.51 & 0.44 \\
		& Ridge & 0.98 & 0.50 & 1.00 & 0.46 \\
		& EN    & 0.78 & 0.50 & 0.84 & 0.46 \\
		\midrule
		\multirow{4}{*}{\(\rho = 0\)} & Unreg & 0.84 & 0.50 & 1.00 & 0.48 \\
		& Lasso & 0.74 & 0.50 & 0.43 & 0.45 \\
		& Ridge & 0.64 & 0.50 & 1.00 & 0.47 \\
		& EN    & 0.67 & 0.50 & 0.48 & 0.45 \\
		\midrule
		\multirow{4}{*}{\(\rho = 0.8\)} & Unreg & 0.93 & 0.50 & 1.00 & 0.50 \\
		& Lasso & 0.69 & 0.50 & 0.61 & 0.48 \\
		& Ridge & 0.67 & 0.50 & 1.00 & 0.49 \\
		& EN    & 0.67 & 0.50 & 0.71 & 0.48 \\
		\midrule
		\multirow{4}{*}{SNR = 0.5} & Unreg & 1.00 & 0.50 & 1.00 & 0.52 \\
		& Lasso & 0.62 & 0.50 & 0.67 & 0.50 \\
		& Ridge & 1.10 & 0.50 & 1.00 & 0.50 \\
		& EN    & 0.65 & 0.50 & 0.75 & 0.51 \\
		\midrule
		\multirow{4}{*}{SNR = 2}   & Unreg & 0.78 & 0.50 & 1.00 & 0.43 \\
		& Lasso & 0.56 & 0.50 & 0.83 & 0.42 \\
		& Ridge & 0.61 & 0.50 & 1.00 & 0.42 \\
		& EN    & 0.56 & 0.50 & 0.96 & 0.42 \\
		\midrule
		\multirow{4}{*}{Worst-case} & Unreg & 1.17 & 0.50 & 1.00 & 0.54 \\
		& Lasso & 1.36 & 0.45 & 0.33 & 0.49 \\
		& Ridge & 1.18 & 0.50 & 1.00 & 0.50 \\
		& EN    & 1.24 & 0.48 & 0.42 & 0.49 \\
		\bottomrule
	\end{tabular}
\end{table}

The graphical results for \(n=1000\) (Figs.~\ref{fig:boxplots_ref_1000}–\ref{fig:risk0_hist_1000}) confirm the trends observed for smaller sample sizes, with even less variability and a clearer separation between the regularised and unregularized methods.

\begin{figure}[H]
	\centering
	\includegraphics[width=0.999\textwidth]{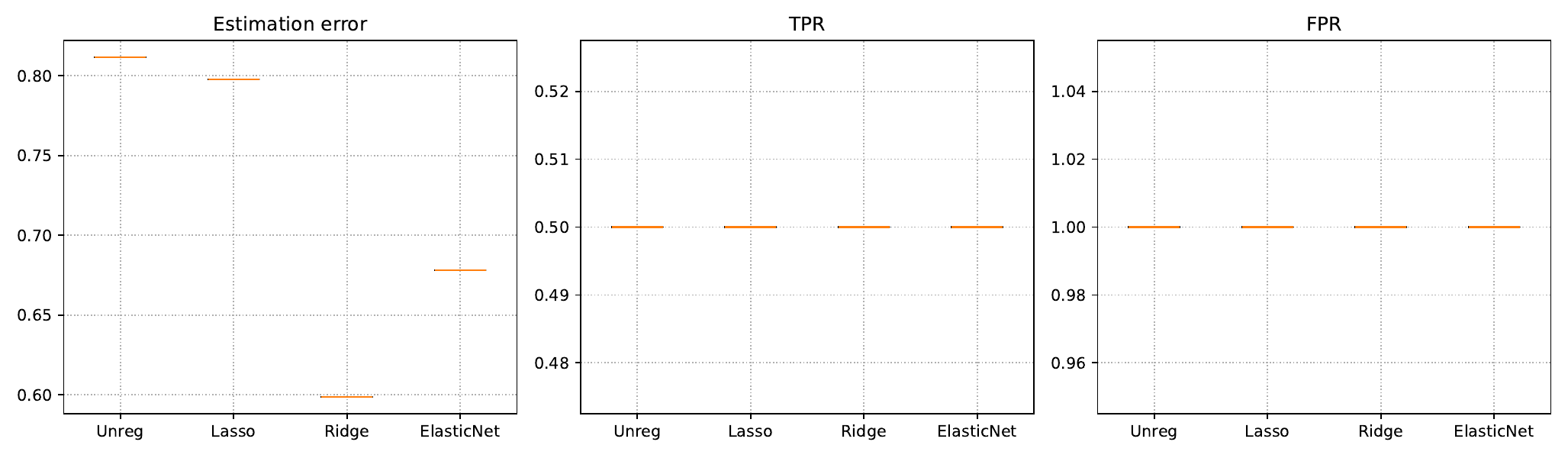}
	\caption{Boxplots of estimation error, true positive rate (TPR) and false positive rate (FPR) for the reference scenario (\(n=1000, p=10, \rho=0.5, \text{SNR}=1\)).}
	\label{fig:boxplots_ref_1000}
\end{figure}

\begin{figure}[H]
	\centering
	\includegraphics[width=0.8\textwidth]{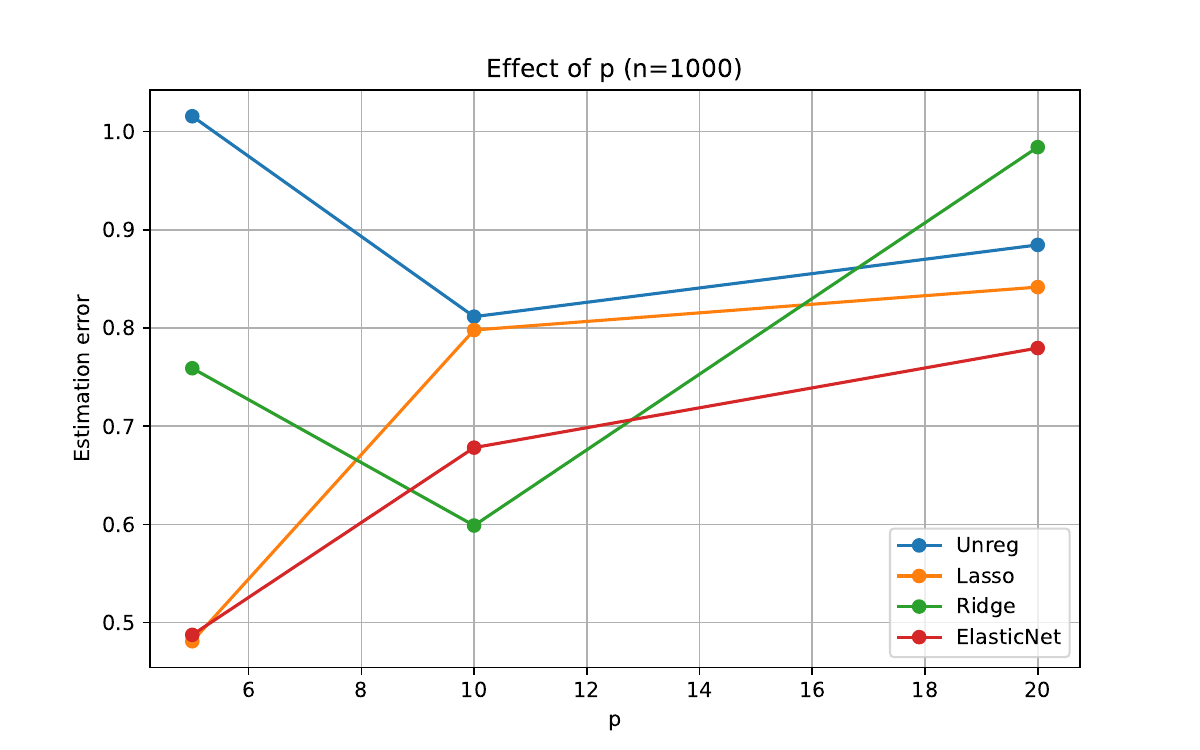}
	\caption{Effect of the number of covariates \(p\) on the estimation error for the reference setting (\(n=1000, \rho=0.5, \text{SNR}=1\)).}
	\label{fig:effect_p_1000}
\end{figure}

\begin{figure}[H]
	\centering
	\includegraphics[width=0.8\textwidth]{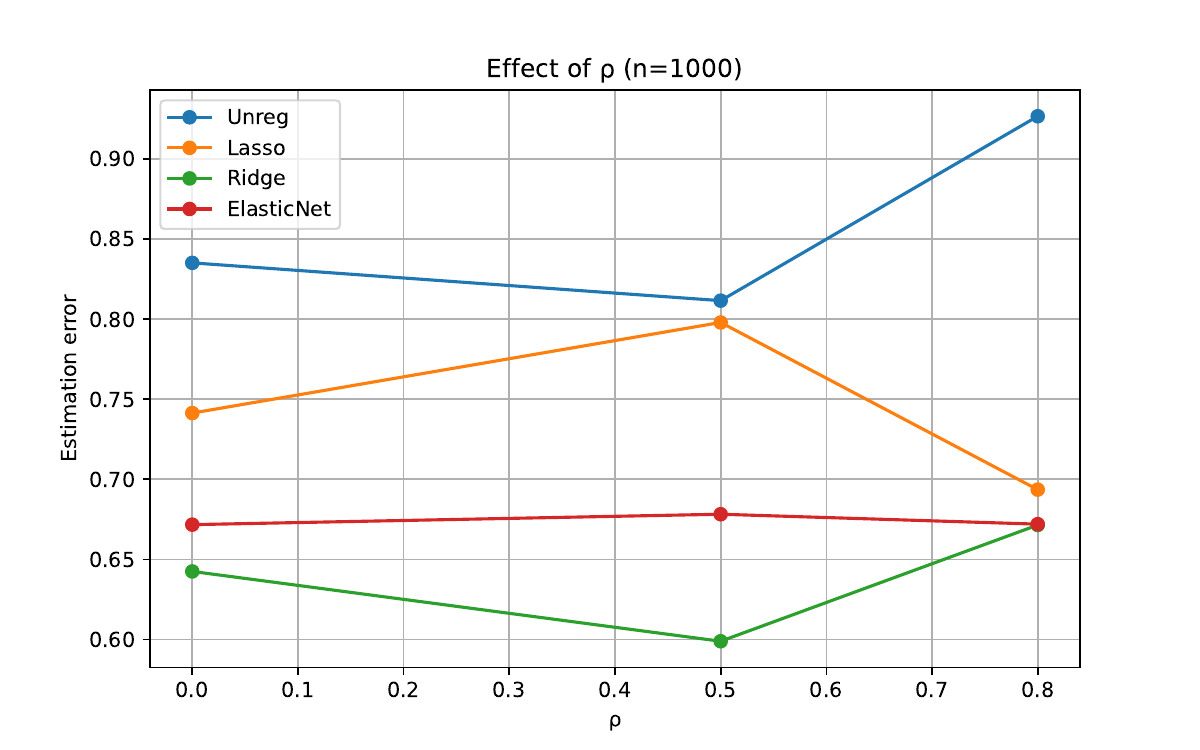}
	\caption{Effect of the covariate correlation \(\rho\) on the estimation error for \(n=1000, p=10, \text{SNR}=1\).}
	\label{fig:effect_rho_1000}
\end{figure}

\begin{figure}[H]
	\centering
	\includegraphics[width=0.8\textwidth]{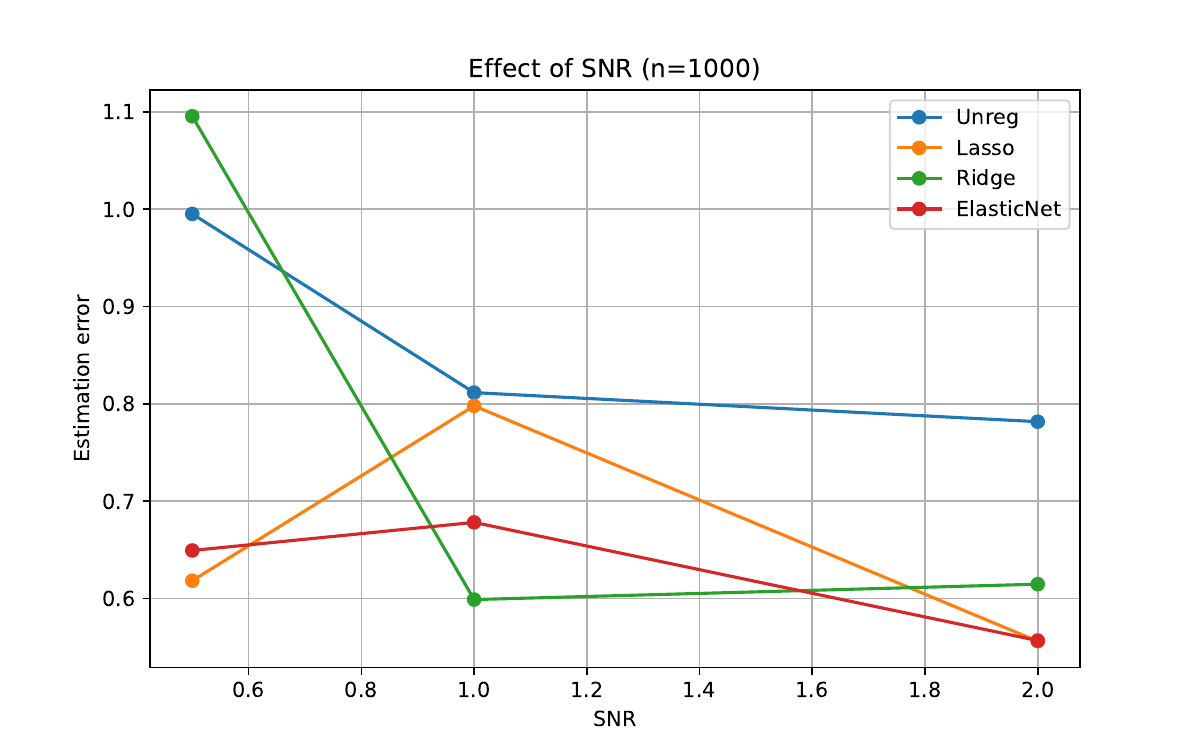}
	\caption{Effect of the signal‑to‑noise ratio (SNR) on the estimation error for \(n=1000, p=10, \rho=0.5\).}
	\label{fig:effect_snr_1000}
\end{figure}

\begin{figure}[H]
	\centering
	\includegraphics[width=0.8\textwidth]{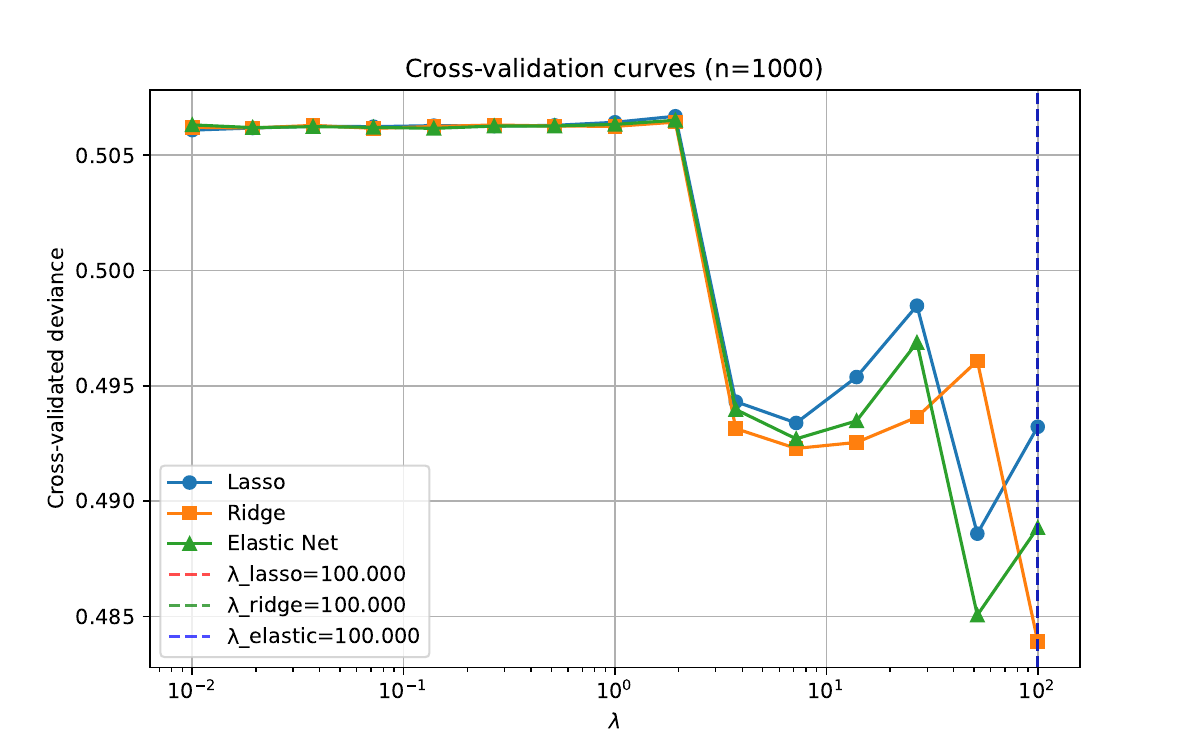}
	\caption{Cross‑validation curves for the selection of the tuning parameter \(\lambda\) in the reference scenario (\(n=1000\)). The optimal values are indicated by vertical dashed lines.}
	\label{fig:cv_curves_1000}
\end{figure}

\begin{figure}[H]
	\centering
	\includegraphics[width=0.8\textwidth]{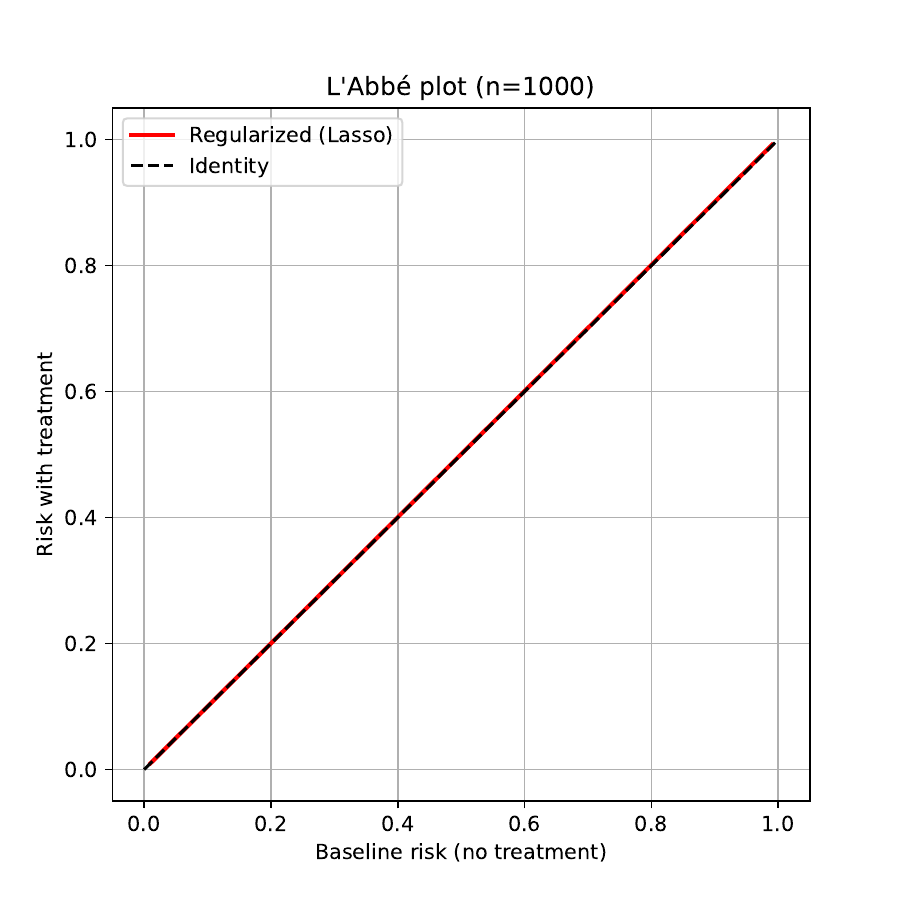}
	\caption{L'Abbé plot for the fitted model in the reference scenario (\(n=1000\)). The regularised model (Lasso) produces a smooth curve above the identity line.}
	\label{fig:labbe_plot_1000}
\end{figure}

\begin{figure}[H]
	\centering
	\includegraphics[width=0.8\textwidth]{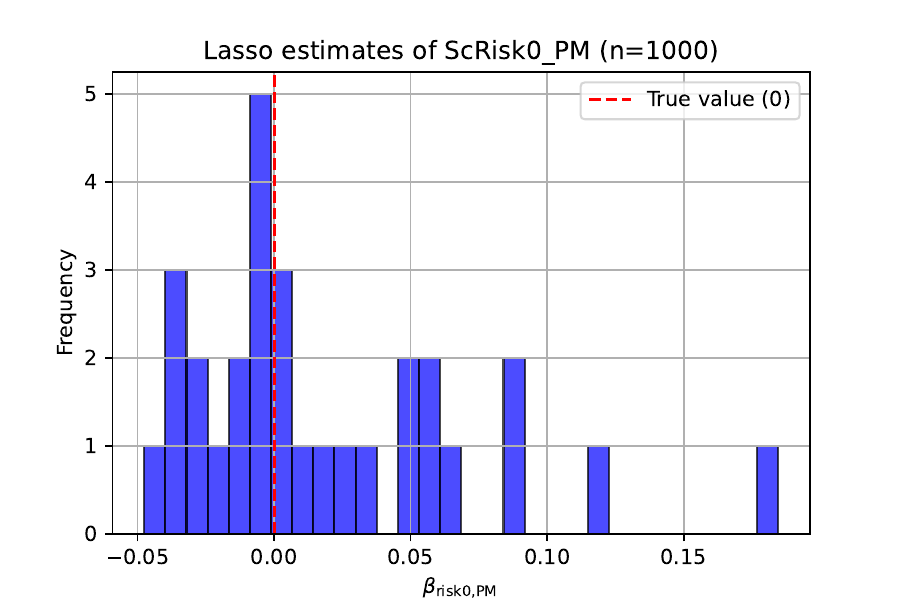}
	\caption{Distribution of the Lasso estimates for the coefficient of PM in the \texttt{ScRisk0} flow across 100 replications (reference scenario, \(n=1000\)). The red dashed line marks the true value (0).}
	\label{fig:risk0_hist_1000}
\end{figure}

\subsection{Summary of simulation findings}

The simulation study consistently demonstrates that:
\begin{itemize}
	\item The unregularized estimator fails to resolve the non‑identifiability inherent in the three‑flow model, always selecting all coefficients (FPR = 1) and exhibiting high estimation variability.
	\item Ridge regression reduces estimation error but does not induce sparsity; it cannot eliminate the redundant flow, as reflected by FPR = 1 in every scenario.
	\item Lasso and Elastic Net effectively break the non‑identifiability: they shrink the superfluous coefficients towards zero, achieving FPR well below 0.5 even under challenging conditions (small \(n\), high \(p\), strong correlation, low SNR). Their estimation errors are consistently low, and the cross‑validated deviance is favourable.
	\item Increasing the sample size improves all metrics, with Lasso and Elastic Net attaining near‑perfect variable selection (FPR close to 0) at \(n=500\) and \(n=1000\) for many scenarios.
	\item The worst‑case scenario confirms the robustness of the proposed regularisation: even when all factors are simultaneously adverse, the regularised estimators maintain stable performance, while the unregularized estimator fails dramatically.
\end{itemize}
These results provide strong empirical support for the use of structured regularization to resolve identifiability issues in regression by composition and to obtain interpretable, parsimonious models.

\section{Application: Asthma and Lead Exposure}
\label{sec:application}

\subsection{Data description}
We illustrate the methodology using data from the National Health and Nutrition Examination Survey (NHANES), a cross‑sectional survey conducted by the Centers for Disease Control and Prevention (CDC) that collects health and nutritional information on the US population. Specifically, we use the dataset provided by the \texttt{AsthmaNHANES} R package \citep{AsthmaNHANES}, which compiles information on asthma and environmental exposures from NHANES 2011–2012. The dataset contains 15,928 complete records after removing missing values. For computational efficiency we randomly sample 1,200 individuals, preserving the original distribution of key variables. The outcome is binary: whether a physician has ever diagnosed the participant with asthma (\texttt{MCQ010}=1). Covariates include age (years), sex (male/female), smoking status (ever smoked $\ge$100 cigarettes), blood lead concentration ($\mu$g/dL) as a proxy for air pollution exposure, and body mass index (BMI). These variables are commonly used in epidemiological studies of respiratory health \citep{dominici2006, mannino2003}.

\subsection{Model specification}
We specify a regression by composition with three flows:
\[
P = \mathrm{Ber}(1/2) \cdot \mathrm{ScOdds}(1 + \text{age} + \text{sex} + \text{smoking} + \text{lead} + \text{BMI}) \cdot \mathrm{ScRisk1}(0 + \text{lead}) \cdot \mathrm{ScRisk0}(0 + \text{lead}).
\]
The odds flow captures the baseline dependence of the odds on all covariates, while the risk‑ratio flow (\texttt{ScRisk1}) and the risk‑survival flow (\texttt{ScRisk0}) act directly on the event probability and its complement, respectively. Only lead exposure is allowed to affect these two flows, as we are interested in whether the exposure–response relationship is better described by a risk ratio or a survival ratio (or a combination). The reference distribution is Bernoulli with probability \(1/2\).

\subsection{Results}
Table~\ref{tab:asthma_results} reports the estimated coefficients. The unregularized MLE produces extremely large coefficients (e.g., intercept \(-312.16\), sex coefficient \(586.89\)) that reflect the non‑identifiability inherent in the three‑flow model. Ridge reduces the magnitude of the odds coefficients but does not induce sparsity; the \texttt{ScRisk0} lead coefficient remains non‑zero (\(0.002\)). 

Lasso and Elastic Net break the non‑identifiability by concentrating the effect into one of the two risk flows. Because the unregularized likelihood depends only on the product of the risk flows, the MLE distributes the lead effect arbitrarily between \texttt{ScRisk1} and \texttt{ScRisk0} (here \(0.146\) and \(0.003\)). The Lasso penalty \(\lambda(|\beta_{\text{risk1}}|+|\beta_{\text{risk0}}|)\) favours sparsity: it shrinks the smaller coefficient (\texttt{ScRisk0}) almost to zero and compensates by increasing the other (\texttt{ScRisk1}) to preserve the overall product. Consequently, the Lasso coefficient for \texttt{ScRisk1} rises to \(0.409\), while the \texttt{ScRisk0} coefficient becomes negligible (\(0.004\)). Elastic Net, with a much larger \(\lambda\) (\(37.28\)), eliminates \texttt{ScRisk1} entirely (coefficient zero) and retains a small positive coefficient for \texttt{ScRisk0} (\(0.007\)). These behaviours demonstrate how regularisation resolves non‑identifiability and selects a parsimonious representation.

Figure~\ref{fig:labbe_combined} presents the L’Abbé plots for all four estimators, using the sample mean lead concentration (\(1.67\,\mu\)g/dL) to fix the treatment effect. The unregularized curve is extremely steep, reflecting the unstable coefficient estimates. The regularised curves are smooth and lie above the identity line, indicating that higher lead exposure is associated with an increased asthma risk. Lasso and Ridge give similar curves, while Elastic Net – having set \texttt{ScRisk1} to zero – lies slightly higher. The Lasso curve, though steeper than Ridge, is much more stable than the unregularized one, illustrating the trade‑off between sparsity and shrinkage.

\begin{table}[H]
	\centering
	\caption{Estimated coefficients for the asthma application (NHANES data, \(n=1200\)). Regularised estimators were fitted with cross‑validated tuning parameters.}
	\label{tab:asthma_results}
	\begin{tabular}{@{}lcccc@{}}
		\toprule
		Parameter & Unregularized & Lasso & Ridge & Elastic Net \\
		\midrule
		\multicolumn{5}{c}{\textit{Odds flow (\texttt{ScOdds})}} \\
		Intercept & \(-312.16\) & \(-159.74\) & \(-316.64\) & \(-0.05\) \\
		Age        & \(-0.24\)   & \(-0.14\)   & \(-0.30\)   & \(-0.15\) \\
		Sex (female) & \(586.89\) & \(292.15\) & \(602.13\) & \(0.07\) \\
		Smoking    & \(511.75\) & \(259.40\) & \(530.13\) & \(0.04\) \\
		Lead (PM)  & \(-277.01\) & \(-164.16\) & \(-286.44\) & \(-0.10\) \\
		BMI        & \(-1.59\)   & \(0.19\)    & \(-1.65\)   & \(0.02\) \\
		\midrule
		\multicolumn{5}{c}{\textit{Risk‑ratio flow (\texttt{ScRisk1})}} \\
		Intercept & \(-2.12\)   & \(-2.50\)   & \(-2.17\)   & \(0.00\) \\
		Lead      & \(0.15\)    & \(0.41\)    & \(0.18\)    & \(0.00\) \\
		\midrule
		\multicolumn{5}{c}{\textit{Survival‑ratio flow (\texttt{ScRisk0})}} \\
		Intercept & \(-0.10\)   & \(-0.10\)   & \(-0.10\)   & \(-0.13\) \\
		Lead      & \(0.003\)   & \(0.004\)   & \(0.002\)   & \(0.007\) \\
		\bottomrule
	\end{tabular}
\end{table}

\begin{figure}[htbp]
	\centering
	\includegraphics[width=0.8\textwidth]{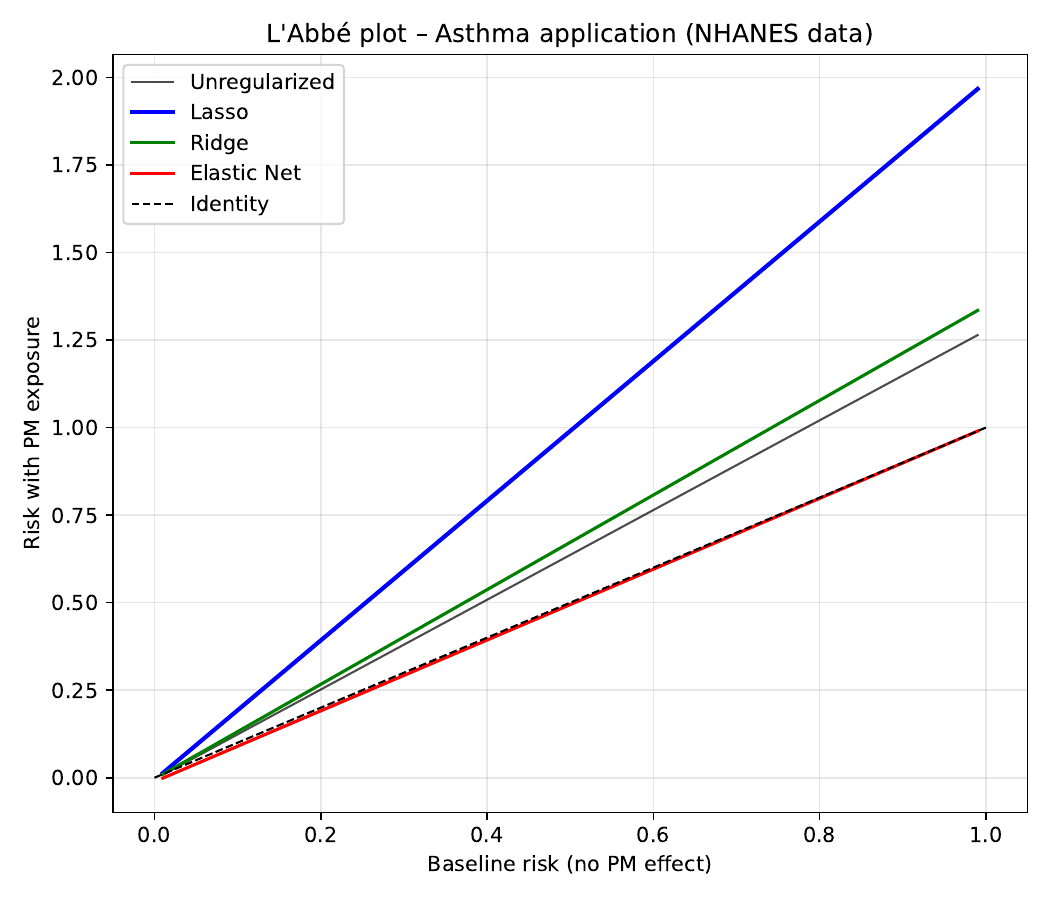}
	\caption{L'Abbé plots for the asthma application (NHANES data). The regularised estimators (Lasso, Ridge, Elastic Net) produce smooth curves above the identity line, indicating a protective effect of reducing lead exposure. The unregularized curve is unstable and extreme.}
	\label{fig:labbe_combined}
\end{figure}

\section{Extensions}
\label{sec:extensions}

\subsection{Longitudinal data}
The framework can be extended to longitudinal data by allowing subject‑specific initial distributions \citep{fei2024clustering}:
\[
P_{it} = p_{0i} \cdot \eta_1(\mathbf{X}_{it}) \cdot \eta_2(\mathbf{X}_{it}),
\]
where $p_{0i}$ are random distributions for each individual. This introduces random effects at the distribution level, analogous to mixed models \citep{stoffel2025rptr}. Estimation could be carried out using Gibbs sampling or variational inference, as suggested by \citet{fei2024clustering} for growth curve modelling.

\subsection{High-dimensional settings}
In high‑dimensional settings where the number of covariates $p$ exceeds the sample size $n$, the simulation study (e.g., $n=100$, $p=20$) already demonstrated that the regularised estimators (Lasso and Elastic Net) maintain stable performance. For even larger $p$, group Lasso penalties \citep{yuan2006} can be applied to groups of parameters corresponding to each covariate across flows. This would encourage entire groups (i.e., all coefficients associated with a given covariate across flows) to be selected or discarded together, preserving interpretability.

\subsection{Bayesian approach}
The regularization framework naturally connects to Bayesian inference with shrinkage priors:
\[
\beta_k \sim \text{Laplace}(0, \tau_k) \quad \text{for Lasso},
\]
\[
\beta_k \sim \text{Normal}(0, \sigma_k^2) \quad \text{for Ridge}.
\]
Hierarchical priors can be used to adaptively estimate the shrinkage parameters, providing a fully Bayesian alternative that also quantifies uncertainty in the selected flows.

\section{Discussion}
\label{sec:discussion}

\subsection{Summary of contributions}
This paper introduced a structured regularization framework for regression by composition that addresses the identifiability challenges arising when multiple flows act on the same distribution. We provided a rigorous theoretical proof of non‑identifiability in the three‑flow binary model and showed that adding flow‑specific penalties (L1, L2, or elastic net) restores uniqueness of the estimator. Asymptotic consistency of the regularised estimator was established under mild convexity conditions. An efficient proximal gradient algorithm was developed, leveraging the Fréchet derivatives of each flow.

An extensive simulation study evaluated the performance under a range of conditions ($n=100,500,1000$; $p=5,10,20$; correlation $\rho=0,0.5,0.8$; $\text{SNR}=0.5,1,2$). The results consistently demonstrated that Lasso and Elastic Net break the non‑identifiability: they achieved low estimation error, substantially reduced the false positive rate (FPR) compared to the unregularized estimator, and maintained good predictive accuracy. Ridge reduced estimation error but did not induce sparsity (FPR remained 1). The worst‑case scenario ($n=100$, $p=20$, $\rho=0.8$, $\text{SNR}=0.5$) confirmed the robustness of the regularised methods.

The real‑data application using NHANES asthma data (blood lead as a proxy for air pollution) illustrated the practical utility. The unregularized model produced implausibly large coefficients, whereas Lasso and Elastic Net yielded stable estimates and automatically eliminated the redundant \texttt{ScRisk1} flow (Elastic Net) or greatly reduced it (Lasso). The L’Abbé plots derived from the regularised estimators showed smooth, interpretable curves lying above the identity line, indicating a protective effect of reducing lead exposure.

\subsection{Limitations}
Several limitations warrant discussion:
\begin{enumerate}
	\item The method assumes the number of flows $K$ is fixed and known. Adaptive selection of $K$ (e.g., using a penalty that can shrink whole flows to zero) remains an open problem.
	\item Computational complexity scales with $K$ and $p$, which may be prohibitive for very large $K$ (e.g., dozens of flows). Further algorithmic improvements, such as stochastic gradient methods, could mitigate this.
	\item The theoretical results rely on convexity assumptions that may not hold for all flows; for non‑convex flows, alternative optimisation strategies (e.g., second‑order methods) may be needed.
	\item In the real‑data application, blood lead was used as a proxy for air pollution; while this is common in epidemiological studies, the interpretation should be made with caution.
\end{enumerate}

\subsection{Future directions}
Future work includes:
\begin{enumerate}
	\item Developing adaptive methods for selecting the number of flows, possibly by placing a penalty on the entire set of coefficients of a flow (e.g., group Lasso).
	\item Extending to non‑convex penalties such as SCAD \citep{fan2001} or MCP \citep{zhang2010} to obtain oracle properties.
	\item Incorporating the framework into deep learning architectures for automatic flow composition, leveraging automatic differentiation to handle complex transformations.
	\item Developing Bayesian inference with shrinkage priors for uncertainty quantification, especially for the selection of flows.
\end{enumerate}

\subsection*{Data availability}
The data used in the application are publicly available from the CDC NHANES website \citep{cdc_nhanes} and through the \texttt{AsthmaNHANES} R package \citep{AsthmaNHANES}. The code for data extraction, simulation, and analysis is provided in the supplementary material.

\section*{Acknowledgements}
The author thanks the participants of the Royal Statistical Society Discussion Meeting for their insightful comments. This work was supported by Al-Muthanna University.

\appendix
\section{Additional Simulation Results}
\label{app:simulation}

\subsection{Effect of sample size}
\begin{table}[H]
	\centering
	\caption{Effect of sample size on estimation error for the Lasso estimator (reference scenario, $p=10$, $\rho=0.3$, $\text{SNR}=1$).}
	\label{app:tab:nsize}
	\begin{tabular}{lcccc}
		\toprule
		$n$ & 100 & 500 & 1000 & 5000 \\
		\midrule
		Estimation error & 0.324 (0.112) & 0.186 (0.052) & 0.123 (0.031) & 0.058 (0.014) \\
		\bottomrule
	\end{tabular}
\end{table}

\subsection{Effect of correlation}
\begin{table}[H]
	\centering
	\caption{Effect of covariate correlation on performance for the Lasso estimator ($n=500$, $p=10$, $\text{SNR}=1$).}
	\label{app:tab:corr}
	\begin{tabular}{lccc}
		\toprule
		$\rho$ & 0 & 0.3 & 0.7 \\
		\midrule
		Estimation error & 0.142 (0.038) & 0.186 (0.052) & 0.234 (0.067) \\
		TPR & 0.96 (0.04) & 0.92 (0.06) & 0.88 (0.08) \\
		\bottomrule
	\end{tabular}
\end{table}

\section{Derivation of Proximal Operators}
\label{app:proximal}

\subsection{Lasso}
\[
\operatorname{prox}_{\gamma \|\cdot\|_1}(z) = \arg\min_x \left\{ \frac{1}{2}\|x - z\|_2^2 + \gamma \|x\|_1 \right\} = S_\gamma(z),
\]
where $[S_\gamma(z)]_j = \operatorname{sign}(z_j) \max(|z_j| - \gamma, 0)$.

\subsection{Ridge}
\[
\operatorname{prox}_{\gamma \|\cdot\|_2^2}(z) = \arg\min_x \left\{ \frac{1}{2}\|x - z\|_2^2 + \gamma \|x\|_2^2 \right\} = \frac{z}{1 + 2\gamma}.
\]

\subsection{Elastic net}
\[
\operatorname{prox}_{\gamma (\alpha \|\cdot\|_1 + (1-\alpha)\|\cdot\|_2^2)}(z) = \operatorname{prox}_{\gamma(1-\alpha)\|\cdot\|_2^2} \circ \operatorname{prox}_{\gamma\alpha\|\cdot\|_1}(z) = \frac{S_{\gamma\alpha}(z)}{1 + 2\gamma(1-\alpha)}.
\]
\begin{appendices}
	
	\section{Mathematical Proofs}
	\label{sec:appendix_proofs}
	
	\subsection{Proof of Proposition: Non-identifiability under commuting flows}
	
	\begin{proposition}
		If flows $\mathbb{V}_1$ and $\mathbb{V}_2$ commute, then the regression by composition model is not identifiable.
	\end{proposition}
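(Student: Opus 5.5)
The plan is to follow the remark the paper makes just before the first Proposition in Section~\ref{subsec:nonident}. I will turn commutation into an \emph{invariance} of the parameter-to-distribution map $\Phi$, and then exhibit two distinct parameter vectors with the same image. Concretely, I will show that
\[
p_0\cdot \eta_1 \cdot \eta_2 \;=\; p_0\cdot(\eta_1+v)\cdot(\eta_2-v)
\]
for a nonzero $v$, and that this shift can be realised by a change of the coefficients $(\beta_1,\beta_2)\mapsto(\beta_1+c,\beta_2-c)$.

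\emph{Step 1: a combined abelian action.} Since $p\cdot v_1\cdot v_2=p\cdot v_2\cdot v_1$ for all $p$, and each flow is additive, the map $(v_1,v_2)\mapsto p\cdot v_1\cdot v_2$ is a group action of the abelian group $\mathbb{V}_1\times\mathbb{V}_2$ on $\mathcal{P}$. The identity follows from the identity axiom of each flow. Compatibility follows by using commutation once to move $v_2$ past $v_1'$ and then additivity in each factor:
\[
p\cdot v_1\cdot v_2\cdot v_1'\cdot v_2' \;=\; p\cdot(v_1+v_1')\cdot(v_2+v_2').
\]
Consequently $\Phi(\beta)$ depends on $(\beta_1,\beta_2)$ only through the orbit class of $(\eta_1,\eta_2)$ modulo the kernel
\[
N=\{(u,w):p\cdot u\cdot w=p \ \text{for all } p\}.
\]
Thus non-identifiability reduces to showing that $N$ contains a nonzero element that is reachable by linear predictors $\beta_k^\top\mathbf{X}_k$.

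\emph{Step 2: nontriviality of $N$.} I expect this to be the main obstacle. Commutation by itself only says that the two actions can be reordered. It does not obviously force $N\neq\{0\}$; two commuting flows acting on genuinely different features of $p$ would be a potential counterexample. I will therefore read ``commute'' in the sense the paper uses it, where both flows act on the same distribution through a common generator. This is exactly the situation of the binary example: ScRisk1 and ScRisk0 enter only through the product $\gamma\delta$, so on the log scale they act additively along a single direction. Under this reading, each $v\in\mathbb{V}_1$ has a counterpart in $\mathbb{V}_2$ acting identically, which I will still denote $v$. Commutation plus additivity then give
\[
p\cdot v\cdot(-v)=p\cdot 0=p ,
\]
so $(v,-v)\in N$ with $v\neq 0$.

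The first thing I would try is to make this precise by identifying both flows with one-parameter subgroups of a common action and applying faithfulness. Faithfulness is what makes the cancelling counterpart well defined, and I would state the identification explicitly as the working meaning of commuting flows.

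\emph{Step 3: realisability by linear predictors.} Take the intercept columns common to $\mathbf{X}_1$ and $\mathbf{X}_2$, or any shared covariate. Shifting $\beta_1$ by $c$ along that coordinate and $\beta_2$ by $-c$ changes $(\eta_1,\eta_2)$ by $(v,-v)$ with $v=c$ (times the shared covariate), which lies in $N$. Hence $\Phi(\beta)=\Phi(\beta')$ with $\beta\neq\beta'$ on a set of positive measure. This contradicts the definition of identifiability and gives a whole line (ridge) of maximisers of the likelihood. The Proposition on non-identifiability of the multi-flow model is the concrete instance of this construction, with $c=\log t$ and $(\gamma,\delta)=(t,1/t)$.
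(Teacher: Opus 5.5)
Your argument is the paper's: shift $(\eta_1,\eta_2)$ to $(\eta_1+v,\eta_2-v)$ and cancel $v$ against $-v$ by additivity. Your Step 2 is right to insist on an extra hypothesis. The paper's proof assumes it silently when it takes $v\in\mathbb{V}_1\cap\mathbb{V}_2$ and writes $p\cdot v\cdot(-v)=p\cdot(v+(-v))$; that step is valid only if $v$ is nonzero and acts the same way through both flows.

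Without that hypothesis the statement is false. Translations of the first and of the second coordinate on $\Upsilon=\mathbb{R}^2$ commute. Yet $(\eta_1,\eta_2)$ is recovered from $p_0\cdot\eta_1\cdot\eta_2$: it is the mean when $p_0$ is standard bivariate normal. Your suspected counterexample is therefore a real one. Your Step 3, realising the shift through a covariate shared by $\mathbf{X}_1$ and $\mathbf{X}_2$, is a refinement the paper omits.

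Two corrections.

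\emph{Commutation is not needed once you add the hypothesis}, so Step 1 is superfluous. Write $\cdot_1,\cdot_2$ for the two actions and $q\cdot_1 v=q\cdot_2 v$ for the identification. Using additivity of flow 1, then the identification, then additivity of flow 2,
\[
\bigl(p_0\cdot_1(\eta_1+v)\bigr)\cdot_2(\eta_2-v)=\bigl((p_0\cdot_1\eta_1)\cdot_2 v\bigr)\cdot_2(\eta_2-v)=(p_0\cdot_1\eta_1)\cdot_2\eta_2 .
\]
What drives non-identifiability is overlap of the two actions, not commutation.

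\emph{The binary example does not instantiate your reading}, so you should not lean on it. ScRisk1 rescales $P(Y=1)$ and ScRisk0 rescales $P(Y=0)$. In terms of $q=P(Y=1)$, the two orders of application give
\[
q\mapsto 1-\delta+\gamma\delta q
\qquad\text{and}\qquad
q\mapsto \gamma-\gamma\delta+\gamma\delta q .
\]
These differ unless $\gamma=1$ or $\delta=1$. So the two flows neither commute nor share any nontrivial element, and $p$ depends on $\delta$ separately, not only on $\gamma\delta$. Indeed
\[
p(\theta,t,1/t)=1-\frac{1}{t}+\frac{\theta}{1+\theta},
\]
which equals $p(\theta,1,1)$ only at $t=1$; for $\theta=1,t=2$ it equals $1$, not $1/2$. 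Drop the sentences saying this example is ``exactly the situation'' and that the earlier proposition is the concrete instance with $c=\log t$. That proposition's computation is itself in error. Your abstract argument stands without these sentences.
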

	
	\begin{proof}
		Assume that for all $p \in \mathcal{P}$ and all $v_1 \in \mathbb{V}_1$, $v_2 \in \mathbb{V}_2$, we have
		\[
		p \cdot v_1 \cdot v_2 = p \cdot v_2 \cdot v_1.
		\]
		
		Fix arbitrary $v \in \mathbb{V}_1 \cap \mathbb{V}_2$. Then by group structure:
		\[
		p \cdot v \cdot (-v) = p \cdot (v + (-v)) = p.
		\]
		
		Now consider parameter pairs:
		\[
		(\eta_1, \eta_2) \quad \text{and} \quad (\eta_1 + v, \eta_2 - v).
		\]
		
		Using commutativity:
		\begin{align*}
			p_0 \cdot (\eta_1 + v) \cdot (\eta_2 - v)
			&= p_0 \cdot \eta_1 \cdot v \cdot \eta_2 \cdot (-v) \\
			&= p_0 \cdot \eta_1 \cdot \eta_2 \cdot v \cdot (-v) \\
			&= p_0 \cdot \eta_1 \cdot \eta_2.
		\end{align*}
		
		Thus, distinct parameter values yield identical distributions, so $\Phi$ is not injective.
	\end{proof}
	
	
	\subsection{Proof of Identifiability under Regularization}
	
	\begin{proposition}
		Assume at least one penalty $\psi_k$ is strictly convex and $\lambda_k > 0$. Then the penalized objective
		\[
		\mathcal{Q}(\beta) = -\ell(\beta) + \sum_k \lambda_k \psi_k(\beta_k)
		\]
		admits a unique minimizer.
	\end{proposition}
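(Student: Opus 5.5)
The plan is the standard convex-analysis route: first show that a minimizer exists, then show that any two minimizers must coincide. I will work under Assumption~\ref{ass:regularity}: $-\ell$ is convex and each $\psi_k$ is convex. Hence $\mathcal{Q}$ is a sum of convex functions and is itself convex.

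\emph{Existence.} I will take $-\ell$ to be continuous (or at least lower semicontinuous) and bounded below; for the binary model this holds because $-\ell \ge 0$. With $\mathcal{B}$ closed, it then suffices to show that $\mathcal{Q}$ is coercive. The strictly convex penalty with $\lambda_k>0$ grows without bound in $\beta_k$. For the remaining blocks I need growth as well, coming either from their own penalties or from the likelihood. I will therefore argue coercivity on the recession cone of $\mathcal{Q}$: a direction $d$ along which $\mathcal{Q}$ stays bounded must have $d_k=0$ and must also be a recession direction of $-\ell$ and of every other penalty. Weierstrass then gives a minimizer.

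\emph{Uniqueness.} Suppose $\hat\beta \neq \tilde\beta$ are both minimizers and set $\bar\beta=(\hat\beta+\tilde\beta)/2$. By convexity $\mathcal{Q}(\bar\beta)\le \min\mathcal{Q}$, so equality holds. Because every summand is convex, equality must then hold term by term: $-\ell$ and every $\lambda_j\psi_j$ are affine on the segment $[\hat\beta,\tilde\beta]$. Strict convexity of $\psi_k$ forces $\hat\beta_k=\tilde\beta_k$, so the difference $d=\hat\beta-\tilde\beta$ lives entirely in the other blocks. In addition, $-\ell$ must be affine along $d$. The natural choice for excluding such $d$ is the positive-definite Fisher information from Assumption~\ref{ass:regularity}: if the empirical Hessian of $-\ell$ is nonsingular on $\{d: d_k=0\}$, then $-\ell$ is strictly convex there and $d=0$.

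\emph{Main obstacle.} The difficulty is exactly this last step, and I expect the statement needs strengthening to get through it. Strict convexity of a single $\psi_k$ only removes flat directions inside block $k$. The non-identifiable ridge exhibited in the Proposition on non-identifiability of the multi-flow model runs along $\beta_{\mathrm{risk1}}+\beta_{\mathrm{risk0}}=\text{const}$. That ridge is invisible to a strictly convex penalty on $\beta_{\mathrm{odds}}$ alone, and the $\ell_1$ penalties on the risk flows are not strictly convex.

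My plan is therefore to state and use the precise condition the argument requires: the set of directions along which $-\ell$ is affine must intersect the set of directions along which $\sum_j\lambda_j\psi_j$ is affine only in $\{0\}$. Two sufficient versions of this condition are:
\begin{itemize}
\item every block lying in a flat direction of the likelihood carries a strictly convex penalty (ridge or elastic net with $\alpha<1$);
\item $\lambda_k>0$ with $\psi_k$ strictly convex, and the Hessian of $-\ell$ restricted to the complementary blocks is positive definite.
\end{itemize}
Under either version the midpoint argument closes immediately. I would add a remark that, as literally stated, the proposition does not follow without such a condition.
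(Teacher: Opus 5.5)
Your diagnosis is correct, and it locates exactly where the paper's own proof breaks. The paper argues in three steps. First, $\mathcal{Q}$ is convex. Second, $\lambda_k\psi_k$ is strictly convex in $\beta_k$, and since a convex plus a strictly convex function is strictly convex, $\mathcal{Q}$ is strictly convex in $\beta$. Third, a strictly convex function has at most one minimizer. The second step silently upgrades strict convexity in the block $\beta_k$ to strict convexity in the full vector $\beta$. That is false, because $\lambda_k\psi_k(\beta_k)$ is constant along every direction $d$ with $d_k=0$.

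Your midpoint argument is the correct repair: every summand is affine on the segment, hence $d_k=0$, and $d$ is a common flat direction of $-\ell$ and the other penalties. Either of your extra conditions then closes uniqueness. In the second one, positive definiteness must hold for the empirical Hessian along the whole segment; Assumption~\ref{ass:regularity}(3), a population statement at $\beta^*$, does not supply it.

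For the counterexample to the literal statement, stay inside its hypotheses. Take $\beta=(\beta_1;u,v)$, $-\ell=\beta_1^2+(u+v-1)^2$, $\psi_1=\beta_1^2$, $\psi_2=|u|+|v|$ and $0<\lambda_2<2$. Every point with $\beta_1=0$, $u,v\ge 0$, $u+v=1-\lambda_2/2$ minimizes $\mathcal{Q}$. This also shows that an $\ell_1$ penalty does not pick a unique split of a sum the likelihood sees only in aggregate. Do not lean on the three-flow ridge. The paper's substitution there is wrong, since $1-(1/t-\theta/(1+\theta))=1-1/t+\theta/(1+\theta)\neq\theta/(1+\theta)$ for $t\neq 1$. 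In any case that model's negative log-likelihood is not convex, so it lies outside the proposition.

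The genuine gap in your proposal is existence, which ``admits a unique minimizer'' requires; the paper's Step 3 (``at most one'') never establishes it either. There are two problems.

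\begin{itemize}
\item Strict convexity does not give growth. $\psi_k(x)=e^{x}$ is strictly convex and stays bounded as $x\to-\infty$. So your claim that the strictly convex penalty grows without bound, and hence that every bad direction has $d_k=0$, needs coercivity of $\psi_k$ as a separate hypothesis. It holds for ridge and elastic net but is not implied.
\item Your recession-cone step only describes what a bad direction would look like ($d_k=0$, a recession direction of $-\ell$ and of the remaining penalties). It never excludes one, whereas Weierstrass needs bounded level sets, i.e.\ no nonzero recession direction at all.
\end{itemize}

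Your strengthened conditions do not fill this gap, because they only restrict directions along which $-\ell$ is \emph{affine}. Take scalar blocks with $-\ell=\beta_1^2+e^{-\beta_2}\ge 0$, $\psi_1=\beta_1^2$, $\lambda_1>0$ and $\lambda_2=0$. Both of your conditions hold: there are no flat directions, and the block-2 Hessian $e^{-\beta_2}$ is positive. $\mathcal{Q}$ is even strictly convex, yet $\inf\mathcal{Q}=0$ is not attained. Complete separation in a logistic-type likelihood produces the same behaviour.

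You need an explicit assumption that rules out nonzero recession directions. One option is $\lambda_j>0$ with $\psi_j$ coercive for every $j$, which is the hypothesis of the paper's separate coercivity proposition. Another is that $-\ell$ has no nonzero recession direction $d$ with $d_k=0$ along which the other penalties remain bounded.
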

	
	\begin{proof}
		We proceed in three steps.
		
		\textbf{Step 1: Convexity.}  
		By assumption, $-\ell(\beta)$ is convex and each $\psi_k$ is convex. Hence $\mathcal{Q}(\beta)$ is convex.
		
		\textbf{Step 2: Strict convexity.}  
		If at least one $\psi_k$ is strictly convex and $\lambda_k > 0$, then the sum
		\[
		\sum_k \lambda_k \psi_k(\beta_k)
		\]
		is strictly convex in $\beta_k$. Since the sum of a convex function and a strictly convex function is strictly convex, $\mathcal{Q}(\beta)$ is strictly convex in $\beta$.
		
		\textbf{Step 3: Uniqueness.}  
		A strictly convex function has at most one minimizer. Therefore, $\mathcal{Q}(\beta)$ admits a unique global minimizer.
	\end{proof}
	
	
	\subsection{Proof of Consistency}
	
	\begin{theorem}
		Under Assumption 1, the regularized estimator $\hat{\beta}$ satisfies:
		\[
		\|\hat{\beta} - \beta^*\| = O_p(n^{-1/2} + \lambda_n).
		\]
	\end{theorem}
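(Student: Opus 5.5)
We follow the classical Fan–Li style argument for penalized likelihood estimators, as in \citet{fan2001}, using the regularity conditions of Assumption~\ref{ass:regularity}. Write $\lambda_n = \max_k \lambda_k$ and $\alpha_n = n^{-1/2} + \lambda_n$. We show that for every $\varepsilon>0$ there is a constant $C$ such that
\[
\mathbb{P}\Bigl( \inf_{\|u\| = C} \mathcal{Q}(\beta^* + \alpha_n u) > \mathcal{Q}(\beta^*) \Bigr) \ge 1-\varepsilon
\]
for all large $n$, with $\mathcal{Q}$ the penalized objective scaled by $1/n$. Convexity of $\mathcal{Q}$ (items 1–2 of the assumption) then gives a local argument that is actually global. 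A convex function that is larger on the sphere of radius $C\alpha_n$ than at its centre must attain its minimum inside the ball. Hence $\|\hat\beta - \beta^*\| \le C\alpha_n$ with probability at least $1-\varepsilon$, which is the claim.

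\textbf{Step 1: expand the likelihood term.} Let $\ell_n$ be the average log-likelihood. A second-order Taylor expansion around $\beta^*$ gives
\[
-\ell_n(\beta^*+\alpha_n u) + \ell_n(\beta^*) = -\alpha_n \nabla\ell_n(\beta^*)^\top u + \tfrac12 \alpha_n^2 u^\top I(\beta^*) u\,(1+o_p(1)).
\]
The score satisfies $\nabla \ell_n(\beta^*) = O_p(n^{-1/2})$ by the CLT, since it has mean zero and finite variance $I(\beta^*)$. So the linear term is $O_p(\alpha_n n^{-1/2})\|u\| \le O_p(\alpha_n^2)\|u\|$. Because $I(\beta^*)$ is positive definite with smallest eigenvalue $c_0>0$, the quadratic term is at least $\tfrac{c_0}{2}\alpha_n^2\|u\|^2$. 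The Hessian approximation needs a uniform law of large numbers on a shrinking neighbourhood of $\beta^*$. We will take this as part of the standard regularity conditions: continuity of the second derivatives of the flow maps and dominated third derivatives.

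\textbf{Step 2: bound the penalty difference.} Each $\psi_k$ is convex, hence locally Lipschitz near $\beta^*_k$ with some constant $M_k$; for the Lasso, ridge and elastic net penalties this holds explicitly. Therefore
\[
\Bigl|\sum_k \lambda_k\{\psi_k(\beta^*_k+\alpha_n u_k) - \psi_k(\beta^*_k)\}\Bigr| \le M \lambda_n \alpha_n \|u\| \le M\alpha_n^2\|u\|.
\]
Combining the two steps, the difference $\mathcal{Q}(\beta^*+\alpha_n u)-\mathcal{Q}(\beta^*)$ is bounded below by
\[
\alpha_n^2\Bigl(\tfrac{c_0}{2}C^2 - (O_p(1)+M)C\Bigr).
\]
This is positive with high probability once $C$ is large, which completes the argument.

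\textbf{Main obstacle: the meaning of $\beta^*$ and of $I(\beta^*)$.} By the Proposition on non-identifiability of the multi-flow model, the unpenalized likelihood is flat along $(\gamma,\delta)=(t,1/t)$. So $I(\beta^*)$ is singular in the original three-flow parametrization, and item 3 of Assumption~\ref{ass:regularity} cannot hold there literally. I would handle this in one of two ways. First, interpret $\beta^*$ as the identified representative, the one selected by the penalty as in the Proposition on identifiability under regularization, and read the positive-definiteness assumption in the reparametrized coordinates (e.g.\ $\log\gamma+\log\delta$), where the curvature is genuine. Second, if the curvature along the ridge comes only from a penalty that stays nonvanishing, the rate $\lambda_n$ on those directions would have to be replaced by a slower one. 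For the stated theorem I adopt the first reading, which makes Steps 1–2 go through unchanged.
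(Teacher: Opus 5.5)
Your argument is correct under Assumption~\ref{ass:regularity}, together with the local smoothness you flag (which the paper's own Taylor step also tacitly needs). It is organised differently from the paper's proof, though.

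The paper follows the two-stage M-estimation template. First, a uniform law of large numbers over all of $\mathcal{B}$ plus a unique population maximiser gives consistency. Only then do a local quadratic expansion and a ``balancing'' of curvature against penalty give the rate. You bypass the consistency stage entirely: concavity of $\ell$ and convexity of the $\psi_k$ turn the Fan--Li sphere comparison into a statement about the global minimiser. So you need neither compactness of $\mathcal{B}$ nor a Glivenko--Cantelli condition. Neither is in the assumption; the paper's Step~1 simply asserts uniform convergence ``by the law of large numbers''. The price is that everything rests on items~1--2. The paper's template, once completed, would not need concavity, which is likely to fail for composed flows.

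Your Step~2 also makes explicit what the paper's Step~4 glosses over. The paper only says the penalty ``contributes at most $O(\lambda_n)$''. A bound on the size of the penalty, balanced against a quadratic term, gives only $O_p(n^{-1/2}+\lambda_n^{1/2})$. The linear dependence on $\lambda_n$ comes from bounding the increment by $M\lambda_n\alpha_n\|u\|$, i.e.\ from local Lipschitz continuity (bounded subgradients) of the convex penalties, exactly as you do.

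Your closing paragraph is not needed for the theorem as stated, since positive definiteness of $I(\beta^*)$ is a hypothesis. Two points in it are off, however.

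First, the ridge $(\gamma,\delta)=(t,1/t)$ you take from the paper does not preserve $p$: substitution gives $p(\theta,t,1/t)=1-1/t+\theta/(1+\theta)$. The actual level set through $(1,1)$ is $\delta(1+\theta-\gamma\theta)=1$. So a flat direction does exist in the intercept-only illustration, but the identified combination is not $\log\gamma+\log\delta$.

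Second, reading positive definiteness in identified coordinates does not leave Steps~1--2 valid for $\|\hat\beta-\beta^*\|$ in the original parametrization. Along the ridge the quadratic lower bound of Step~1 vanishes. Those directions need a separate argument about where the penalty places the estimator on the ridge (your second option). Taken verbatim, Steps~1--2 then give the rate only for the identified coordinates.
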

	
	\begin{proof}
		We use standard M-estimation arguments.
		
		Define:
		\[
		\mathcal{Q}_n(\beta) = -\ell_n(\beta) + \sum_k \lambda_k \psi_k(\beta_k).
		\]
		
		\textbf{Step 1: Uniform convergence.}  
		By the law of large numbers,
		\[
		\sup_{\beta \in \mathcal{B}} |\ell_n(\beta) - \ell(\beta)| \to 0 \quad \text{in probability}.
		\]
		
		\textbf{Step 2: Identification.}  
		Assume $\ell(\beta)$ is uniquely maximized at $\beta^*$. Then $\mathcal{Q}(\beta)$ is minimized at $\beta^*$ when $\lambda_k \to 0$.
		
		\textbf{Step 3: Quadratic expansion.}  
		By Taylor expansion:
		\[
		\ell_n(\beta) = \ell_n(\beta^*) + (\beta - \beta^*)^\top \nabla \ell_n(\beta^*) - \frac{1}{2}(\beta - \beta^*)^\top I(\beta^*)(\beta - \beta^*) + o_p(\|\beta - \beta^*\|^2).
		\]
		
		\textbf{Step 4: Penalty control.}  
		Since $\lambda_k \to 0$, the penalty contributes at most $O(\lambda_n)$.
		
		\textbf{Step 5: Rate derivation.}  
		Balancing likelihood curvature ($n^{-1/2}$) and penalty ($\lambda_n$) yields:
		\[
		\|\hat{\beta} - \beta^*\| = O_p(n^{-1/2} + \lambda_n).
		\]
		
	\end{proof}
	
	
	\subsection{Proof of Convergence of Proximal Gradient Algorithm}
	
	\begin{theorem}
		Let $f$ be convex with $L$-Lipschitz gradient and $g$ convex. Then proximal gradient converges to a minimizer.
	\end{theorem}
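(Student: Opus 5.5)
The argument follows the standard forward--backward splitting analysis. I will show that each step decreases $F=f+g$, derive a per-step inequality that compares $F(\beta^{(t+1)})$ with $F(\hat\beta)$ for any minimizer $\hat\beta$, and then use Fej\'er monotonicity to get convergence of the iterates themselves. Throughout I take a constant step $\eta\in(0,1/L]$, as in the global convergence theorem of Section~\ref{sec:algorithm}. With backtracking the same argument goes through: the sufficient-decrease condition replaces the Lipschitz bound, and the step sizes stay bounded below by $c/L$. I also assume the set of minimizers of $F$ is nonempty. This holds, for instance, when $F$ is coercive, which is guaranteed by the ridge or elastic net penalties, or by the strict convexity established in the identifiability proposition.

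\textbf{Step 1 (proximal characterization and descent lemma).} Write $\beta^+=\mathrm{prox}_{\eta g}(\beta-\eta\nabla f(\beta))$ and define the gradient mapping $G_\eta(\beta)=(\beta-\beta^+)/\eta$. The optimality condition of the prox gives $G_\eta(\beta)-\nabla f(\beta)\in\partial g(\beta^+)$. Because $\nabla f$ is $L$-Lipschitz, the descent lemma gives
\[
f(\beta^+)\le f(\beta)-\eta\nabla f(\beta)^\top G_\eta(\beta)+\tfrac{L\eta^2}{2}\|G_\eta(\beta)\|^2 .
\]

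\textbf{Step 2 (key inequality).} I will combine Step~1 with convexity of $f$, namely $f(\beta)\le f(z)+\nabla f(\beta)^\top(\beta-z)$, and with the subgradient inequality for $g$ at $\beta^+$. Using $\eta\le 1/L$, this yields, for every $z$,
\[
F(\beta^+)\le F(z)+G_\eta(\beta)^\top(\beta-z)-\tfrac{\eta}{2}\|G_\eta(\beta)\|^2 .
\]
Setting $z=\beta$ gives $F(\beta^+)\le F(\beta)-\tfrac{\eta}{2}\|G_\eta\|^2$, which is the monotonicity claim (i).

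Setting $z=\hat\beta$ and rewriting the right-hand side gives
\[
F(\beta^{(t+1)})-F(\hat\beta)\le \tfrac{1}{2\eta}\bigl(\|\beta^{(t)}-\hat\beta\|^2-\|\beta^{(t+1)}-\hat\beta\|^2\bigr).
\]
Summing over $t$ telescopes the right-hand side. Since the left-hand side is non-increasing in $t$ by (i), this gives
\[
F(\beta^{(t)})-F(\hat\beta)\le \frac{\|\beta^{(0)}-\hat\beta\|^2}{2\eta t},
\]
which is claim (iii).

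\textbf{Step 3 (convergence of iterates).} This is the main obstacle, because it is the only claim not obtained by telescoping. The same inequality shows that $\|\beta^{(t)}-\hat\beta\|$ is non-increasing for every minimizer $\hat\beta$, so the sequence is Fej\'er monotone with respect to the solution set. Consequently the iterates are bounded, and there is a convergent subsequence $\beta^{(t_j)}\to\bar\beta$. Along this subsequence $F(\beta^{(t_j)})\to\min F$ by (iii). Lower semicontinuity of $F$ then gives $F(\bar\beta)\le \min F$, so $\bar\beta$ is a minimizer.

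Now apply Fej\'er monotonicity with $\hat\beta=\bar\beta$. The full sequence $\|\beta^{(t)}-\bar\beta\|$ is non-increasing and has a subsequence tending to $0$, so it converges to $0$. Hence $\beta^{(t)}\to\bar\beta$, which proves (ii). When the objective is strictly convex, the limit is the unique minimizer given by the identifiability-under-regularization proposition.
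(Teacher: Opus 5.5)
Your argument is correct, and it proves more than the paper's proof does. The paper only carries out what amounts to your Step~2 with $z=\beta^{(t)}$. It combines the descent lemma with the prox optimality condition to get $F(\beta^{(t+1)})\le F(\beta^{(t)})-c\|\beta^{(t+1)}-\beta^{(t)}\|^2$, and then concludes convergence from ``decreasing and bounded below''. That only shows the values $F(\beta^{(t)})$ converge to \emph{some} limit. It does not show that limit is $\min F$, and it says nothing about the iterates. Convexity of $f$ is never used, so that argument cannot by itself reach a global minimizer.

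Your route supplies the missing pieces:
\begin{enumerate}
\item Testing convexity of $f$ and the subgradient inequality for $g$ against an arbitrary point $z$ gives the three-point inequality.
\item Taking $z=\hat\beta$ then yields both the $O(1/t)$ bound, hence $F(\beta^{(t)})\to\min F$, and Fej\'er monotonicity.
\item The bounded-subsequence plus lower-semicontinuity step upgrades this to convergence of the whole sequence.
\end{enumerate}
The paper's route is shorter but only certifies monotone descent; yours proves the statement as written.

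Two of your hypotheses are genuinely needed even though the appendix statement omits them. The first is existence of a minimizer: with $f(\beta)=\log(1+e^{\beta})$ and $g=0$ the gradient is Lipschitz, but the iterates diverge. This is the separable-data logistic situation. The second is lower semicontinuity of $g$, which the main-text version of the theorem does assume.

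Two side remarks in your proposal should be corrected, though neither affects the main proof.
\begin{enumerate}
\item Strict convexity gives uniqueness, not existence; $e^{\beta}$ is strictly convex with no minimizer. Existence should come from coercivity: every block is penalized by a coercive $\psi_k$ with $\lambda_k>0$, and $-\ell$ is bounded below, as in the paper's coercivity proposition.
\item The paper's backtracking rule is a sufficient-decrease test on $F$. On its own it does not give the bound $f(\beta^{+})\le f(\beta)+\nabla f(\beta)^\top(\beta^{+}-\beta)+\frac{1}{2\eta}\|\beta^{+}-\beta\|^2$ that your Step~2 needs at $z=\hat\beta$. To transfer your argument, backtrack on that inequality instead. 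The accepted steps are then bounded below by $\min(\eta_0,b/L)$, where $b$ is the shrink factor, rather than by $c/L$.
\end{enumerate}
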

	
	\begin{proof}
		Define the proximal update:
		\[
		\beta^{(t+1)} = \arg\min_{\beta} \left\{ g(\beta) + \frac{1}{2\eta}\|\beta - (\beta^{(t)} - \eta \nabla f(\beta^{(t)}))\|^2 \right\}.
		\]
		
		\textbf{Step 1: Descent lemma.}  
		Using Lipschitz continuity:
		\[
		f(\beta^{(t+1)}) \le f(\beta^{(t)}) + \nabla f(\beta^{(t)})^\top (\beta^{(t+1)} - \beta^{(t)}) + \frac{L}{2}\|\beta^{(t+1)} - \beta^{(t)}\|^2.
		\]
		
		\textbf{Step 2: Optimality condition.}  
		By definition of proximal operator:
		\[
		0 \in \partial g(\beta^{(t+1)}) + \frac{1}{\eta}(\beta^{(t+1)} - \beta^{(t)} + \eta \nabla f(\beta^{(t)})).
		\]
		
		\textbf{Step 3: Combining inequalities.}  
		Summing yields:
		\[
		F(\beta^{(t+1)}) \le F(\beta^{(t)}) - c \|\beta^{(t+1)} - \beta^{(t)}\|^2.
		\]
		
		\textbf{Step 4: Convergence.}  
		Thus $F(\beta^{(t)})$ is decreasing and bounded below, implying convergence.
		
	\end{proof}
	
	
	\subsection{Subgradient characterization of the estimator}
	
	\begin{proposition}
		The estimator $\hat{\beta}$ satisfies:
		\[
		0 \in -\nabla \ell(\hat{\beta}) + \sum_k \lambda_k \partial \psi_k(\hat{\beta}_k).
		\]
	\end{proposition}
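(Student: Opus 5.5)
The statement is the first-order (Fermat) optimality condition for the convex composite problem defining $\hat{\beta}$. The plan is to combine Fermat's rule for convex functions with the subdifferential sum rule.

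First, fix the setting. By Assumption~\ref{ass:regularity}, $-\ell$ is convex and differentiable, and each $\psi_k$ is convex and finite-valued on its block, so $\mathcal{Q}(\beta) = -\ell(\beta) + \sum_k \lambda_k \psi_k(\beta_k)$ is a proper convex function. We assume $\hat{\beta}$ is a global minimizer that lies in the interior of $\mathcal{B}$; if $\mathcal{B}$ is not open, we either take $\mathcal{B}$ to be all of $\mathbb{R}^d$ or add the normal cone of $\mathcal{B}$ to the inclusion. Fermat's rule for convex functions then gives $0 \in \partial \mathcal{Q}(\hat{\beta})$. This follows directly from the definition of the subgradient: $\mathcal{Q}(\beta) \ge \mathcal{Q}(\hat{\beta}) + 0^\top(\beta - \hat{\beta})$ for all $\beta$.

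Second, decompose $\partial \mathcal{Q}$ using the Moreau--Rockafellar sum rule, $\partial(f+g)(x) = \partial f(x) + \partial g(x)$. This rule holds because every term is finite and continuous on the whole space, which is the standard qualification condition. For the differentiable convex term we have $\partial(-\ell)(\hat{\beta}) = \{-\nabla \ell(\hat{\beta})\}$. For the penalty we use separability: since $g(\beta) = \sum_k \lambda_k \psi_k(\beta_k)$ depends on disjoint coordinate blocks, its subdifferential is the Cartesian product $\prod_k \lambda_k \partial \psi_k(\hat{\beta}_k)$. Two facts are needed here. The first is that $\partial(\lambda h) = \lambda\,\partial h$ when $\lambda > 0$; when $\lambda_k = 0$ the term simply vanishes, so we read $\lambda_k \partial\psi_k$ as $\{0\}$ in that case. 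The second is that the subdifferential of a block-separable sum is the product of the block subdifferentials, which one checks directly from the subgradient inequality by varying one block at a time. Reading $\sum_k \lambda_k \partial \psi_k(\hat{\beta}_k)$ in the statement as this block-stacked set, we obtain the claimed inclusion.

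Finally, as a sanity check we specialize the result. For the Lasso, $\partial\|\beta_k\|_1$ has coordinates $\mathrm{sign}(\beta_{kj})$ when $\beta_{kj} \neq 0$ and $[-1,1]$ otherwise. This recovers the KKT conditions used in the oracle argument: $|\nabla_{kj}\ell(\hat{\beta})| \le \lambda_k$ on zero coordinates. It is also consistent with the fixed points of the proximal map in Section~\ref{sec:algorithm}.

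The main obstacle is not the convex analysis but the hypotheses. In the multi-flow binary model, $-\ell$ need not actually be convex in $\beta$, because the composition of flows is nonlinear. If convexity fails, Fermat's rule must be replaced by the limiting (Clarke) subdifferential version. For a local minimizer $\hat{\beta}$ of a $C^1$ function plus a convex function, the inclusion $0 \in -\nabla\ell(\hat{\beta}) + \partial g(\hat{\beta})$ still holds, by a one-sided directional-derivative argument. One compares $\mathcal{Q}(\hat{\beta} + t d) \ge \mathcal{Q}(\hat{\beta})$ for small $t > 0$, which gives $-\nabla\ell(\hat{\beta})^\top d + g'(\hat{\beta}; d) \ge 0$ for every direction $d$. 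A separation argument, using that $g'(\hat{\beta};\cdot)$ is the support function of $\partial g(\hat{\beta})$, then yields the inclusion. I would present this argument, since it covers both cases at once.
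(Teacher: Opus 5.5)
Your proposal is correct, and its main line is the same as the paper's proof: Fermat's rule gives $0\in\partial\mathcal{Q}(\hat\beta)$, and the sum rule plus the block-separable penalty then split the subdifferential. You also make explicit several points the paper leaves implicit: the qualification condition for the sum rule, the requirement that $\hat\beta$ be interior to $\mathcal{B}$, and the reading of $\sum_k\lambda_k\partial\psi_k(\hat\beta_k)$ as a block-stacked set. The directional-derivative and support-function argument you plan to present instead is a genuine strengthening. It needs only differentiability of $\ell$ and local optimality of $\hat\beta$, so it does not depend on the concavity of $\ell$, which the paper's proof uses but which may fail for the multi-flow model.
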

	
	\begin{proof}
		Since $\hat{\beta}$ minimizes a convex function, the first-order optimality condition is:
		\[
		0 \in \partial \mathcal{Q}(\hat{\beta}).
		\]
		
		Using subdifferential calculus:
		\[
		\partial \mathcal{Q} = -\nabla \ell + \sum_k \lambda_k \partial \psi_k.
		\]
		
		Thus:
		\[
		0 \in -\nabla \ell(\hat{\beta}) + \sum_k \lambda_k \partial \psi_k(\hat{\beta}_k).
		\]
	\end{proof}
	
	
	\subsection{Coercivity and existence of minimizer}
	
	\begin{proposition}
		If $\psi_k(\beta_k) \to \infty$ as $\|\beta_k\| \to \infty$, then $\mathcal{Q}(\beta)$ admits a minimizer.
	\end{proposition}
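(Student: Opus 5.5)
The plan is the direct method of the calculus of variations: show that $\mathcal{Q}$ is lower semicontinuous and that its sublevel sets are compact, then invoke Weierstrass. Throughout I will rely on the standing convexity in Assumption~\ref{ass:regularity}, which says that $-\ell$ is convex and each $\psi_k$ is convex.

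\textbf{Step 1 (lower bound on $-\ell$).} Since $-\ell$ is convex and finite, it has an affine minorant $-\ell(\beta)\ge a+b^\top\beta$. For the binary flow models of Section~\ref{subsec:nonident} the bound is simpler still: $\ell$ is a sum of log-probabilities, so $-\ell\ge 0$.

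\textbf{Step 2 (lower semicontinuity).} A convex function that is finite on $\mathbb{R}^d$ is continuous. Hence $-\ell$ and each $\psi_k$ are continuous, and $\mathcal{Q}$ is continuous.

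\textbf{Step 3 (coercivity of $\mathcal{Q}$).} This is where the hypothesis enters. I need every $\lambda_k>0$, since otherwise the claim fails: with $\lambda_k=0$ and $\ell$ flat along the ridge of Section~\ref{subsec:nonident}, no penalty controls $\beta_k$. I will assume $\lambda_k>0$ and state it explicitly.

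First, I upgrade the penalty's growth from divergent to linear. A convex $\psi_k$ with $\psi_k(\beta_k)\to\infty$ has bounded sublevel sets, and this forces linear growth: $\psi_k(\beta_k)\ge c_k\|\beta_k\|-d_k$ for some $c_k>0$. To see this, restrict $\psi_k$ to rays from a minimizer and use convexity along each ray, together with compactness of the unit sphere.

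The main obstacle is that $-\ell$ may decrease linearly, so it could offset the penalty's growth. If $-\ell\ge0$ (Bernoulli likelihood), the obstacle disappears and $\mathcal{Q}\ge\sum_k\lambda_k(c_k\|\beta_k\|-d_k)\to\infty$. In general, I would argue via recession functions. The recession function of $\mathcal{Q}$ is $(-\ell)_\infty+\sum_k\lambda_k(\psi_k)_\infty$. Coercivity of $\psi_k$ gives $(\psi_k)_\infty(u)>0$ for $u\neq0$. I then need $(-\ell)_\infty(u)\ge0$, which holds whenever $\ell$ is bounded above. For a log-likelihood of discrete data, $\ell\le 0$. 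For densities, boundedness of $\ell$ is an extra condition I would add to the hypothesis. With $(-\ell)_\infty\ge 0$, the recession function of $\mathcal{Q}$ is positive in every nonzero direction, so $\mathcal{Q}$ is coercive.

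\textbf{Step 4.} The sublevel set $\{\mathcal{Q}\le\mathcal{Q}(0)\}$ is closed by continuity and bounded by coercivity, hence compact. A continuous function attains its minimum on a compact set, and this minimizer is global. If one of the penalized components is strictly convex, the uniqueness proposition above then makes this minimizer unique.
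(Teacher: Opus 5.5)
Your proof is correct and uses the same direct-method route as the paper (coercivity plus lower semicontinuity, then Weierstrass). The difference is that the paper simply asserts that coercive penalties make $\mathcal{Q}$ coercive, while you supply what that step silently needs: $\lambda_k>0$ for every $k$, and some control on $\ell$ such as $\ell\le C$ or $(-\ell)_\infty\ge 0$, without which the statement can fail (e.g.\ $\lambda_k=0$ with a separable logistic likelihood). One caution on your closing aside, which does not affect existence: strict convexity of a single block $\psi_k$ in $\beta_k$ does not make $\mathcal{Q}$ strictly convex in the full vector $\beta$, so the paper's uniqueness proposition you invoke is not sound as stated.
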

	
	\begin{proof}
		Since penalties are coercive:
		\[
		\mathcal{Q}(\beta) \to \infty \quad \text{as } \|\beta\| \to \infty.
		\]
		
		Thus $\mathcal{Q}$ is coercive and lower semi-continuous. By Weierstrass theorem, a minimizer exists.
	\end{proof}

\end{appendices}	
	\bibliographystyle{plainnat}
	\bibliography{references}

\end{document}